\documentclass[9pt,twoside]{pnas-new}

\articletype{CLASSIFICATION}

\templatetype{pnasmathematics} 

\csappto{ps@firststyle}{\fancyhead{}} 

\usepackage{amsthm}
\usepackage{bm}
\usepackage{graphicx}
\newcommand{\vexp}{\alpha_{\text{vac}}}
\newcommand{\mexp}{\alpha_{\text{mag}}}

\usepackage{tikz}
\usetikzlibrary{decorations.pathmorphing}
\usetikzlibrary{arrows.meta,positioning,calc}

\newtheorem{theorem}{Theorem}
\newtheorem{lemma}{Lemma}

\newtheorem{remark}{Remark}

\newtheorem{definition}{Definition}
\newtheorem{assumption}{Assumption}

\begin{document}

\title{Elasticized magnetohydrostatics}

\author[a,b,2]{J. W. Burby}
\author[a,b]{I. Cho}
\author[a,b]{M. Ruth}

\affil[a]{Department of Physics, The University of Texas at Austin, Austin, TX 78712, USA}
\affil[b]{Institute for Fusion Studies, The University of Texas at Austin, Austin, TX 78712, USA}

\leadauthor{Burby}

\significancestatement{The design of stellarator fusion reactors requires solving an optimization problem in a very high dimensional space. Analytical theory is meant to serve as a light that guides optimizers through this formidable design space. But it partially fails in this respect due to foundational disagreements between theory and computation and incomplete coverage of high-pressure operation regimes. This work proposes a new framework for modeling steady-state stellarator configurations that repairs the disconnect between theory and computation, and unlocks high-pressure reactor design.}

\authorcontributions{Please provide details of author contributions here.}
\authordeclaration{Please declare any competing interests here.}
\correspondingauthor{\textsuperscript{2}To whom correspondence should be addressed. E-mail: joshua.burby\@austin.utexas.edu}

\keywords{Stellarator $|$ Fusion $|$ Elasticity }

\begin{abstract}
The modern approach to modeling equilibria in stellarators features two blind spots: lack of foundational agreement between theory and computation, and inability to reliably explore economically-attractive high-$\beta$ concepts. These blind spots originate at a technical level from lack of a useful existence theory for the partial differential equations believed to describe equilibria, and from the encumbrance of crude, inflexible models for the topology of the confining magnetic field. This Article proposes the first stellarator equilibrium model that rigorously guarantees existence of an equilibrium with any chosen magnetic topology. It completely separates determination of the magnetic topology from the equilibrium existence question, and thereby enables importing realistic tearing physics into equlibrium computations. The new model replaces the standard plasma potential energy functional with its ensemble average over rapid kinetic-scale fluctuations. Averaging ``elasticizes" the potential by energetically penalizing separation of field lines. The elasticized potential's super-linear dependence on plasma strain embodies a phenomenological model for profile stiffness, and plays an essential role in the existence theory.

\end{abstract}

\dates{This manuscript was compiled on \today}
\doi{\url{www.pnas.org/cgi/doi/10.1073/pnas.XXXXXXXXXX}}

\maketitle
\thispagestyle{firststyle}
\ifthenelse{\boolean{shortarticle}}{\ifthenelse{\boolean{singlecolumn}}{\abscontentformatted}{\abscontent}}{}

\Firstpage


Stellarator fusion is advancing rapidly, driven by the success of Wendelstein 7-X \cite{klingerOverviewFirstWendelstein2019} and growing private investment \cite{schmittMagnetohydrodynamicEquilibriumStability2025,prostEconomicallyOptimizedDesign2024,lionStellarisHighfieldQuasiisodynamic2025,swansonOverviewHeliosDesign2025}. A central challenge for a stellarator fusion pilot plant is the ability to reliably predict and optimize three-dimensional stellarator equilibria, including the magnetic field $\bm{B}$ and plasma pressure $p$. However, the current approach to equilibrium modeling features two blind spots: lack of foundational agreement between theory and computation, and inability to reliably explore economically-attractive high-$\beta$ concepts. Here $\beta$ denotes the ratio of plasma pressure to magnetic energy density -- a number that quantifies how efficiently the magnetic field confines the plasma fuel. These blind spots originate at a technical level from lack of a useful existence theory for the magnetohydrostatics (MHS) equations, $(\nabla\times\bm{B})\times\bm{B} = \nabla p$, $\nabla\cdot\bm{B}=0$, and from the encumbrance of crude, inflexible models for the topology of the confining magnetic field. These blind spots limit the ability of theory to guide design optimization, which is essential for development of a viable fusion pilot plant.

\textbf{Theory contradicts computation ---} Notable disagreements between theory and computation touch upon various foundational questions in stellarator confinement. Assuming MHS, Garren-Boozer \cite{garrenExistenceQuasihelicallySymmetric1991} argued quasisymmetric (QS) stellarator equilibria cannot exist. Using VMEC \cite{hirshmanSteepestdescentMomentMethod1983},  Landreman-Paul \cite{landremanMagneticFieldsPrecise2022} identified numerical equilibria with precise QS in the entire plasma volume. Assuming MHS, Cary-Shasharina \cite{caryOmnigenityQuasihelicityHelical1997,cary_helical_1997} show analytic omnigeneity (OM) implies QS. Using VMEC, Goodman \emph{et al.} \cite{goodman_constructing_2023} find vacuum fields (automatically analytic!) with precise OM and significant deviations from QS. More broadly, the theory of tearing modes predicts smooth current density across rational flux surfaces, while equilibrium computations routinely reveal singular current densities in the same locations \cite{lazersonVerificationIdealMagnetohydrodynamic2016a}. 

Disagreements between theory and computation of stellarator equilibria plausibly originate from lack of a useful MHS existence theory. In the 1960s Harold Grad recognized severe mathematical difficulties posed by these equations in non-symmetric plasma containers. He conjectured \cite{gradToroidalContainmentPlasma1967,constantin_flexibility_2021,cardona_asymmetry_2025} non-existence of smooth solutions with nested flux surfaces. Recent counterexamples \cite{landreman_analytic_2026,issan_analytic_2026} show that Grad's conjecture is false. However, a more practical weakening of Grad's conjecture still stands. It asserts smooth MHS solutions fail to exist for generic plasma boundary shape and profile functions. A significant body of computational work \cite{lazersonVerificationIdealMagnetohydrodynamic2016a,mikhailov_elimination_2019} supports the weakened conjecture. Moreover, the recent counterexamples fall far short of addressing the weakened conjecture because they only include finite-dimensional families of plasma boundary shapes. In spite of the plausible truth of a weakened Grad conjecture, most of stellarator confinement theory continues to assume MHS is actually satisfied in a stellarator. This creates a potential for vacuous theoretical claims and predictions.  On the other hand, codes like VMEC do not actually solve the MHS equations. Instead they solve implicitly-defined resolution-dependent regularizations of the MHS model. This sets up ideal conditions for vacuous theoretical predictions manifesting as contradictions between theory and computation.

The mathematical difficulties identified by Grad may not implicate the MHS model \emph{per se}, but rather the assumption of nested flux surfaces. There is some rigorous mathematical support for this perspective. Bruno-Laurence \cite{brunoExistenceThreedimensionalToroidal1996} found nearly-axisymmetric MHS solutions comprising nested toroidal annula in which the magnetic field is Beltrami. Their solutions suffer jump discontinuities across toroidal interfaces, but generally contain magnetic islands and chaos \cite{caryNoncanonicalHamiltonianMechanics1983}. The equilibrium code SPEC \cite{hudsonComputationMultiregionRelaxed2012} computes this sort of equilibrium in domains far from axisymmetry, where the Bruno-Laurence theory does not apply. More recently, Constantin-Pasqualotto \cite{constantinMagneticRelaxationVoigt2023} showed that a particular form of magnetic relaxation, similar in spirit to methods implemented in some iterative equilibrium solvers \cite{huangComputationMagnetohydrodynamicEquilibria2025,brownForcedReconnectionVoigtRegularized2026,suzukiDevelopmentApplicationHINT22006}, produces non-trivial MHS solutions in any domain. While all of these results are suggestive, they fail to provide a useful MHS existence theory for stellarators. The Bruno-Laurence solutions have only been rigorously established in nearly-axisymmetric domains, and numerical evidence \cite{quNonexistenceSteppedpressureEquilibria2021} suggests they need not exist in strongly non-axisymmetric domains. While the Constantin-Pasqualotto solutions do exist in stellarator-relevant domains, their suitability for stellarator confinement remains shrouded in mystery. The existence theory does not show these configurations support a sufficient volume of flux surfaces to realistically achieve burning plasma conditions.

\textbf{Elusive high-$\beta$ designs ---}  
Higher field strengths translate to larger magnetic coils, which translates to much higher construction costs. High-$\beta$ stellarator designs therefore generally make more efficient use of their applied magnetic field. But most design optimization studies use fast nested-surface equilibrium solvers like VMEC \cite{hirshmanSteepestdescentMomentMethod1983} or DESC \cite{paniciDESCStellaratorCode2023,dudtDESCStellaratorEquilibrium2020} inside of the optimization loop. These solvers are doomed to fail sufficiently close to the so-called equilibrium $\beta$-limit \cite{loizuEquilibriumBetalimitsClassical2017,baillodEquilibriumBetalimitsDependence2023}, where flux surfaces degrade without triggering an ideal MHD instability. Conditioned on the assumption of piecewise Beltrami fields, the high impact of allowing for non-nested magnetic topology during design optimization with $\beta > 0$ was demonstrated directly \cite{smiet_efficient_2025} through use of the SPEC in place of VMEC or DESC. If the piecewise-Beltrami assumption was known to accurately reproduce the true tearing physics \cite{connorTearingModesToroidal1988,hegnaPhysicsNeoclassicalMagnetohydrodynamic1998,hegnaPlasmaFlowHealing2012,fitzpatrickTearingModeDynamics2023} that governs magnetic topology in stellarators, these results would indicate SPEC offers the key to high-$\beta$ design. However, no existing stellarator equilibrium model, including SPEC, faithfully reproduces tearing physics due to pervasive use of crude magnetic topology models. High-$\beta$ designs remain elusive.




The crude nature of current approaches to equilibrium magnetic topology modeling is reflected in the most widespread simulation tools used for stellarator equilibrium modeling. The codes VMEC \cite{hirshmanSteepestdescentMomentMethod1983} and DESC \cite{dudtDESCStellaratorEquilibrium2020,paniciDESCStellaratorCode2023} assume the plasma-region magnetic field is tangent to concentric toroidal flux surfaces. This assumption is reasonable in low-$\beta$ configurations, but is known to fail at larger $\beta$ \cite{reiman_pressure-induced_2007,loizuEquilibriumBetalimitsClassical2017,baillodEquilibriumBetalimitsDependence2023,zhangIdentificationMHDEquilibrium2026}.
Starting in the late 1980s, the potential for non-nested magnetic topologies in stellarators motivated development of non-nested 3D equilibrium solvers. Iterative solvers, such as PIES \cite{reimanCalculationThreedimensionalMHD1986,drevlakPIESFreeBoundary2005}, SIESTA \cite{hirshmanSIESTAScalableIterative2011}, HINT2 \cite{suzukiDevelopmentApplicationHINT22006}, and MRX \cite{blickhanMRXDifferentiable3D2025} introduce physically-motivated relaxation effects that numerically generate magnetic islands and chaos, usually starting from a nested initial guess. Other solvers like SPEC \cite{hudsonComputationMultiregionRelaxed2012} and SPECTRE \cite{balkovicSPECTRERobustSolver2026} target weak solutions of MHS by extending the rigorous Bruno-Laurence construction \cite{brunoExistenceThreedimensionalToroidal1996} to strongly-shaped domains \cite{quNonexistenceSteppedpressureEquilibria2021}. Iterative schemes do not actually simulate the tearing physics responsible for island formation and saturation. Instead they introduce numerical expedients with questionable physics fidelity. In solvers based on Bruno-Laurence theory, the field in between toroidal interfaces is necessarily Beltrami. This leads to a correct topology model only when all ideal constraints except helicity between interfaces are removed. Potential issues with these crude modeling choices are only compounded by their central, inflexible role in the underlying simulation algorithms. For example, users may not freely specify the embedded topology model in any of these codes, even if a better model is available.

\textbf{A new approach ---} This Article proposes the first rigorous energy minimization principle for stellarator equilibria with user-specified magnetic topology. Due to its close connections with the theory of hyperelasticity, this equilibrium modeling framework will be referred to as elasticized magnetohydrostatics. 

The Article begins by appealing to physics beyond ideal MHD to address a major technical hurdle in minimizing the standard plasma potential energy due to Newcomb \cite{newcombLagrangianHamiltonianMethods1961}: lack of coercivity. 
Section \ref{sec:effective_mag} argues that ``elasticizing" Newcomb's magnetic energy by including the mean energy of kinetic-scale magnetic fluctuations improves the functional's growth properties. This argument extends those used to formulate a related equilibrium model in Ref.\,\cite{ruthStatisticalEquilibriumModel2026}. Section \ref{sec:vac_barrier} then argues further that combining the elasticized magnetic energy with a modified thermodynamic internal energy eliminates all apparent obstacles to coercivity. The modified internal energy differs from the usual ideal gas internal energy by a soft vacuum barrier that energetically penalizes excessive rarefaction. Based on the observations from Sections \ref{sec:effective_mag} and \ref{sec:vac_barrier}, Section \ref{sec:model_formulation} presents a precise mathematical formulation of elasticized magnetohydrostatics that defines equilibria as minimizers of a modified potential energy functional. As with Newcomb's variational principle, this model fixes the underlying magnetic topology by specifying a reference magnetic field. Unlike Newcomb's model, the potential energy functional is the sum of the modified internal energy from Section \ref{sec:vac_barrier} and the elasticized magnetic energy from Section \ref{sec:effective_mag}. Refining the heuristic arguments from Sections \ref{sec:effective_mag} and \ref{sec:vac_barrier}, Section \ref{sec:coercivity} rigorously proves coercivity of the elasticized potential energy functional. Leveraging the coercivity property and analytical techniques developed in the context of elasticity theory, Section \ref{sec:existence_theory} applies the direct method in the calculus of variations to prove existence of minimizers for the modified potential in the Sobolev space $W^{1,p}$, comprising $L^p$ back-to-labels maps \cite{cotter_multisymplectic_2007} with gradients that lie in $L^p$. Here the exponent $p$ is set by the nonlinear response of magnetic fluctuations to large-scale plasma deformations. Notably, Theorem \ref{minimizer_existence} provides a minimum potential energy state for any specified magnetic topology whatsoever, including nested, islanded, and chaotic configurations. We are not aware of any comparable results in the literature. 

By allowing a user complete freedom to specify equilibrium magnetic topology, as illustrated in Fig. \ref{fig:side_by_side}, elasticized magnetohydrostatics successfully separates magnetic topology modeling from the equilibrium existence problem. It is therefore the first stellarator equilibrium model that provides a direct interface to sophisticated magnetic topology models based on real tearing physics. It also comes equipped with the first useful existence theory for equilibria in the sense that inputs to their existence theorem directly describe essential features of the solution.

\begin{figure}
    \centering
    \includegraphics[width=0.9\linewidth]{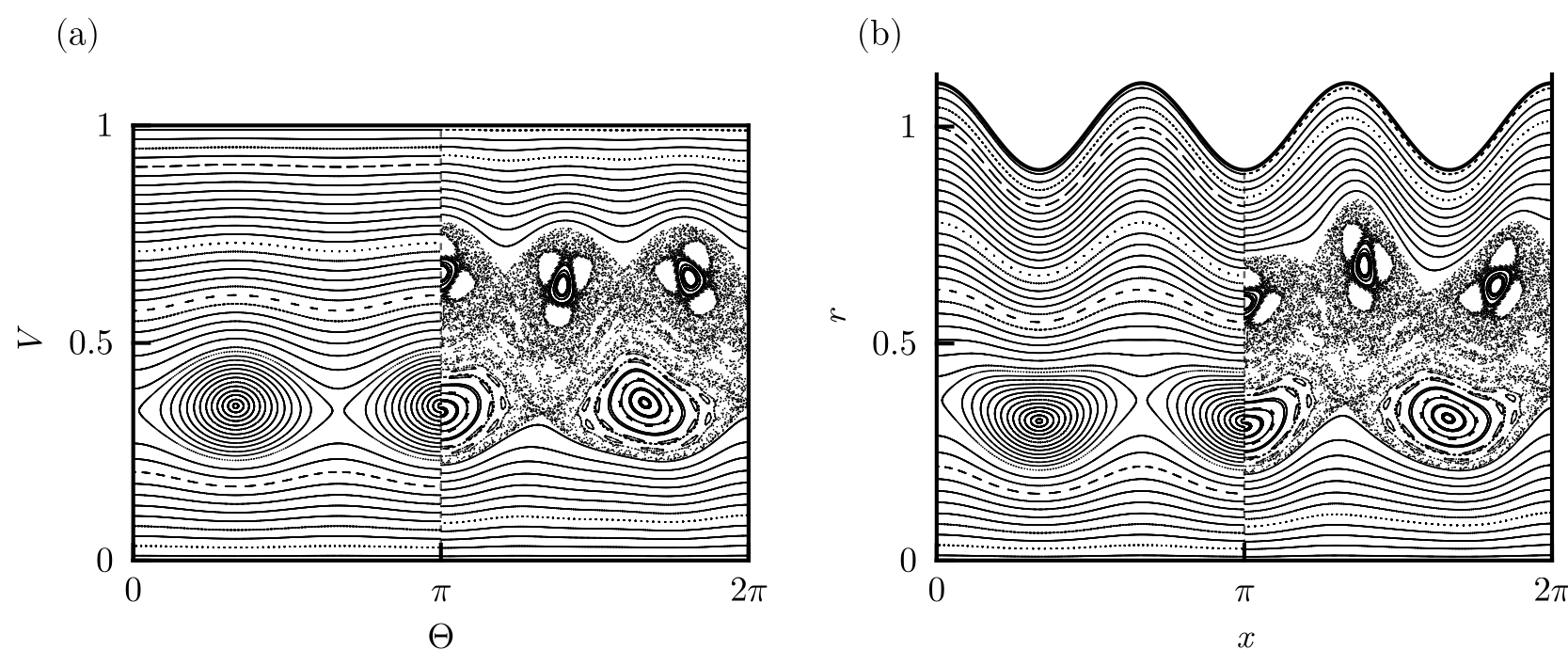}
    \caption{Reference (a) and computed (b) field-line Poincar\'e map using elasticized magnetohydrostatics. Two simulations are compared left-to-right in each figure, differing only by the user-specified magnetic topology. The left simulation features a single $3:1$ integrable island chain, while the right features overlapping $3:1$ and $5:2$ island chains. Euler-Lagrange equation residuals are $1.4\times 10^{-7}$ in each case. }
    \label{fig:side_by_side}
\end{figure}

\section{Elasticized magnetic energy\label{sec:effective_mag}}

In Newcomb's variational formulation of ideal MHD \cite{newcombLagrangianHamiltonianMethods1961}, the plasma configuration space comprises diffeomorphisms $G:Q\rightarrow Q_0$, where $Q$ denotes the plasma container and $Q_0$ denotes a reference plasma container. The diffeomorphism $G$ is known as a back-to-labels map because its value $G(\bm{x})$, $\bm{x}\in Q$, denotes the location of the fluid parcel currently located at $\bm{x}$ if the entire plasma were moved into its reference configuration. Newcomb's potential energy $\mathcal{W}$ is therefore a functional of $G$, whose specific form is given by
\begin{gather}
    \mathcal{W}(G) = \mathcal{W}_{\text{mag}}(G) + \mathcal{W}_{\text{int}}(G)\nonumber\\
    \mathcal{W}_{\text{mag}}(G)  = \frac{1}{2}\mu_0^{-1}\int_Q |\bm{B}|^2\,d^3\bm{x},\quad \mathcal{W}_{\text{int}}(G)  = \int_Q \mathcal{U}(\rho,s)\,\rho\,d^3\bm{x}\label{eq:newcomb_mag_energy},
\end{gather}
where the magnetic field, mass density, and entropy are given in terms of the back-to-labels map $G$ according to
\begin{gather*}
    \bm{B}(\bm{x}) = \mathbb{A}(\bm{x})\bm{B}_0(G(\bm{x})),\quad \rho(\bm{x}) = \mathcal{J}(\bm{x})\,\rho_0(G(\bm{x})),\quad s(\bm{x}) = s_0(G(\bm{x})),\\
    \mathbb{G}(\bm{x}) = DG(\bm{x}),\quad \mathbb{A}(\bm{x}) = \text{adj}\,\mathbb{G}(\bm{x}),\quad \mathcal{J}(\bm{x}) = \text{det}\,\mathbb{G}(\bm{x}).
\end{gather*}
Here $DG(\bm{x})$ denotes the $3\times 3$ matrix whose columns are the partial derivatives of $G$ at $\bm{x}$, $\text{adj}$ denotes the matrix adjugate, and $\bm{B}_0$, $\rho_0$, $s_0$ denote the magnetic field, the mass density, and the entropy in the plasma's reference configuration. The fact that $\bm{B}_0$, $\rho_0$, and $s_0$ are treated as fixed fields in Newcomb's potential energy functional reflects imposition of frozen-in magnetic flux, mass density, and entropy, respectively. Since the formula $\bm{B}(\bm{x}) = \mathbb{A}(\bm{x})\bm{B}_0(G(\bm{x}))$ indicates the magnetic flux $2$-form in $Q$ is the pullback of the magnetic flux $2$-form in $Q_0$, all topological features of the magnetic field $\bm{B}$ are encoded in $\bm{B}_0$.

Newcomb's potential energy functional is mathematically problematic because it fails to be coercive, which is a property that many rigorous applications of the calculus of variations rely upon \cite{dacorognaDirectMethodsCalculus2008}. The functional $\mathcal{W}(G)$ would be coercive if $\mathcal{W}(G)\rightarrow\infty$ whenever $\|G \|\rightarrow\infty$, where $\|\cdot\|$ denotes a suitable norm on the function space containing $G$. This function space and its associated norm will be described in detail in Section \ref{sec:model_formulation}. For now it is enough to assume $\|\cdot\|$ measures both values of $G$ and its first derivative $\mathbb{G}$. 

Lack of coercivity for Newcomb's functional can be understood heuristically as follows. Consider a $1$-parameter family of back-to-labels maps $G_\lambda$, with parameter $\lambda\in\mathbb{R}$. Suppose that on a subregion of $Q$ with volume $V(\lambda)\propto \lambda^{-1}$ the derivative matrix is given by 
\begin{align*}
    \mathbb{G}_\lambda = \begin{pmatrix} \lambda &0&0\\ 0& \lambda^{-1/2} & 0\\ 0&0&\lambda^{-1/2}\end{pmatrix},
\end{align*}
while elsewhere  $\mathbb{G}_\lambda = O(1)$. Now choose any $p>1$. The $p$-norm of the derivative matrix scales with $\lambda$ according to
\begin{align*}
\int_Q \| \mathbb{G}_\lambda\|^p\,d^3\bm{x}\propto \lambda^{p-1}.
\end{align*}
In particular, the $p$-norm of the gradient tends to $\infty$ as $\lambda\rightarrow\infty$. Since $\text{det}\,\mathbb{G}_\lambda = 1$ the internal energy $\mathcal{W}_{\text{int}}(G_\lambda)$ is bounded as $\lambda\rightarrow\infty$. Coercivity of Newcomb's functional would therefore require $\mathcal{W}_{\text{mag}}(G_\lambda)\rightarrow\infty$ as $\lambda\rightarrow \infty$. To assess whether this actually happens, first note that the adjugate matrix in the region where $\mathbb{G}$ is large is given by
\begin{align*}
    \mathbb{A}_\lambda = \begin{pmatrix} \lambda^{-1} &0&0\\ 0& \lambda^{1/2} & 0\\ 0&0&\lambda^{1/2}\end{pmatrix}.
\end{align*}
By the quadratic dependence of $\mathcal{W}_{\text{mag}}(G_\lambda)$ on $\mathbb{A}_\lambda$ the magnetic energy  therefore grows no faster than a constant as $\lambda\rightarrow\infty$. This shows that Newcomb's potential cannot control any $p$-norm of the derivative matrix $\mathbb{G}$ with $p > 1$, a particularly dramatic failure of coercivity. The one-parameter family $G_\lambda$ is an \textbf{escape sequence}. (Note that there is no need to consider $p < 1$ because then $(\int_Q|\cdot|^p)^{1/p}$ does not define a norm.) 


Lack of coercivity is not a necessary condition for existence of minimizers for $\mathcal{W}$, as demonstrated by the single-variable function $f(x) = -1/(1+x^2)$. So it is not fair to infer at this stage that Newcomb's potential energy functional lacks minimizers in some suitable function space. At the same time, since ideal MHD omits a great deal of important plasma physics \cite{freidbergIdealMHD2014} it is also not necessary to settle for Newcomb's potential energy as the only plausible variational description of 3D equilibria. There may be room for improving the coercivity properties of Newcomb's potential energy by including additional physics effects. 

Newcomb's form of the magnetic energy can be improved by recognizing that real stellarators sustain rapid electromagnetic turbulent fluctuations \cite{mishchenkoGlobalGyrokineticSimulations2023} that contribute to the plasma stress tensor. Here ``rapid" refers to a timescale $\tau$ much shorter than the equilibrium evolution timescale. Building on Ref.\,\cite{ruthStatisticalEquilibriumModel2026}, the argument proceeds as follows. Fluctuations of the true magnetic field $\bm{B}$ generally break the frozen-in flux condition. The reference magnetic field \[\bm{B}_0(\bm{x}_0) = \frac{\mathbb{G}(G^{-1}(\bm{x}_0))\bm{B}(G^{-1}(\bm{x}_0))}{\mathcal{J}(G^{-1}(\bm{x}_0))},\]
which would be time-independent in ideal MHD, therefore varies rapidly in time. Suppose averages over the fast time scale can be replaced with ensemble averages defined by some probability measure $P(d\bm{B}_0;G)$. Since changes in the equilibrium generally alter the statistics of gyrokinetic turbulence, this probability measure must depend parametrically on the back-to-labels map $G$. The time average of the magnetic energy then simplifies according to
\begin{align}
    \frac{1}{\tau}\int_0^\tau\,\mathcal{W}_{\text{mag}}(G)\,dt &= \mathbb{E}\left[\frac{1}{2}\mu_0^{-1}\int_Q |\mathbb{A}(\bm{x})\bm{B}_0(G(\bm{x}))|^2\,d^3\bm{x}\right]\nonumber\\
    &= \frac{1}{2}\mu_0^{-1}\int_Q |\mathbb{A}(\bm{x})\overline{\bm{\mathsf{B}}}_0(G(\bm{x}))|^2\,d^3\bm{x}\nonumber\\
    &+ \frac{1}{2}\mu_0^{-1}\int \text{tr}\,(\mathbb{A}^T(\bm{x})\mathbb{A}(\bm{x})\overline{\Sigma}_0(G(\bm{x})))\,d^3\bm{x}\label{eq:mean_mag_en}\\
    &\equiv \mathsf{W}_{\text{mag}}(G),\nonumber
\end{align}
where the one-point mean and variance of the reference magnetic field are given by
\begin{align*}
    \overline{\bm{\mathsf{B}}}_0 = \int \bm{B}_0\,P(d\bm{B}_0;G),\quad \overline{\Sigma}_0 = \int (\bm{B}_0 - \overline{\bm{\mathsf{B}}}_0)(\bm{B}_0 - \overline{\bm{\mathsf{B}}}_0)^T\,P(d\bm{B}_0;G).
\end{align*}
These expressions entail functional integration over the space of reference magnetic fields. Replacing $\mathcal{W}_{\text{mag}}(G)$ with $\mathsf{W}_{\text{mag}}(G)$ improves on Newcomb's magnetic energy by incorporating the effects of nonideal electromagnetic fluctuations. 

The first term in  \,\eqref{eq:mean_mag_en} represents the magnetic energy stored in the mean magnetic field. The second term gives the excess magnetic energy associated with fluctuations about the mean. Both the energy of the mean field and the excess energy generally exhibit a complicated $G$-dependence due to the presence of expectations over the probability measure $P(d\bm{B}_0;G)$. Deducing the precise form of this $G$-dependence from first principles likely lies beyond the reach of analytical methods. However, there is a plausible functional form motivated by simple phenomenological arguments.

In a stellarator the amplitude of turbulent magnetic fluctuations is small compared to the mean field. It is therefore reasonable to expect that the mean magnetic energy reduces to the usual ideal MHD form in the limit of vanishing fluctuations. Thus, to leading order in the fluctuation strength the mean reference field $\overline{\bm{\mathsf{B}}}_0$ is independent of $G$. Denote this $G$-independent reference field $\overline{\bm{\mathsf{B}}}_0 = \bm{\mathsf{B}}_0$. Higher-order corrections to the mean reference field correspond to fluctuation-driven dynamo effects \cite{spenceFluctuationdrivenMagneticFields2008,maoFluctuationinducedDynamoEffect2023}. This work ignores these effects. These considerations completely determine the functional form of the energy of the mean field.

\setlength{\fboxsep}{6pt}
\setlength{\fboxrule}{0.8pt}

\begin{figure}[ht]
\centering

\fbox{%
\begin{minipage}{0.42\textwidth}
\centering

\resizebox{\linewidth}{!}{%


\begin{tikzpicture}[
  scale=0.3,
  transform shape,
  spring/.style={
    draw=purple!80!black,
    line width=1.5pt,
    decorate,
    decoration={
      coil,
      segment length=1.5mm,
      amplitude=1.0mm,
      aspect=0.35
    }
  }
]

\foreach \i in {0,...,10}{
  \pgfmathsetmacro{\x}{2*\i}

  \foreach \j in {0,...,6}{
    \pgfmathsetmacro{\y}{-1.6*\j}
    \fill (\x,\y) circle (5pt);
  }

  \foreach \j in {0,...,5}{
    \pgfmathsetmacro{\yA}{-1.6*\j}
    \pgfmathsetmacro{\yB}{-1.6*(\j+1)}

    \draw[spring]
      (\x,\yA)
      --
      (\x,\yB);
  }
}

\end{tikzpicture}


}

\end{minipage}
}%
\hspace{.5em}
\fbox{%
\begin{minipage}{0.42\textwidth}
\centering

\resizebox{\linewidth}{!}{%


\begin{tikzpicture}[
  scale=0.3,
  transform shape,
  vspring/.style={
    draw=purple!80!black,
    line width=1.2pt,
    decorate,
    decoration={
      coil,
      segment length=1.4mm,
      amplitude=.9mm,
      aspect=0.35
    }
  },
  hspring/.style={
    draw=red!80!black,
    line width=1.2pt,
    decorate,
    decoration={
      coil,
      segment length=1.0mm,
      amplitude=.1mm,
      aspect=0.35
    }
  }
]

\foreach \i in {0,...,9}{
  \foreach \j in {0,...,6}{
    \pgfmathsetmacro{\xA}{2*\i}
    \pgfmathsetmacro{\xB}{2*(\i+1)}
    \pgfmathsetmacro{\y}{-1.6*\j}

    \draw[hspring]
      (\xA,\y)
      --
      (\xB,\y);
  }
}

\foreach \i in {0,...,10}{
  \foreach \j in {0,...,5}{
    \pgfmathsetmacro{\x}{2*\i}
    \pgfmathsetmacro{\yA}{-1.6*\j}
    \pgfmathsetmacro{\yB}{-1.6*(\j+1)}

    \draw[vspring]
      (\x,\yA)
      --
      (\x,\yB);
  }
}

\foreach \i in {0,...,10}{
  \foreach \j in {0,...,6}{
    \pgfmathsetmacro{\x}{2*\i}
    \pgfmathsetmacro{\y}{-1.6*\j}
    \fill (\x,\y) circle (5pt);
  }
}

\end{tikzpicture}


}

\end{minipage}
}
\caption{(Left) Instantaneously, magnetic energy can be viewed as elastic potential energy stored in field lines, shown as unconnected lines of springs. (Right) Averaging over nonideal turbulent fluctuations produces an effective, undirected elasticity effect, shown schematically as additional springs connecting neighboring field lines. \label{fig:effective_elasticity}}
\end{figure}

Next consider the excess energy. The instantaneous magnetic energy is proportional to the plasma strain along the reference magnetic field; components of the strain transverse to the reference field make no energetic contribution. See the left panel in Fig.\,\ref{fig:effective_elasticity}. In ideal MHD, where the reference field does not change, the plasma therefore behaves as a directed elastic medium. But as the reference field fluctuates it changes direction, activating other components of the slowly-evolving strain tensor. Therefore, from the perspective of an observer on a timescale much longer than the fluctuation timescale, fluctuations produce an effective undirected (although anisotropic) elastic potential energy. See the right panel in Fig.\,\ref{fig:effective_elasticity}. This explains the origin and basic physics interpretation of the excess energy term. Its functional dependence on $G$ is determined by considering the one-point magnetic variance $\overline{\Sigma}_0$, which is discussed next. 

Stellarators empirically display a phenomenon known as profile stiffness \cite{beurskensIonTemperatureClamping2021,stroteichSeekingTurbulenceOptimized2022,garbetProfileStiffnessGlobal2004}, whereby increasing plasma profile gradients beyond a threshold causes the rate of change of turbulent heat flux with profile gradients to sharply increase. Several previous studies \cite{barnesCriticallyBalancedIon2011,vanwykTransitionSubcriticalTurbulence2016,adkinsAsymptoticScalingTheory2026} indicate that, for large enough profile gradients, steeper heat flux is accompanied by larger amplitude turbulent fluctuations. Assuming critical balance, asymptotic scaling laws for the amplitude of electrostatic turbulence as a function of profile gradients have been obtained \cite{barnesCriticallyBalancedIon2011,adkinsAsymptoticScalingTheory2026}. Although similar scaling laws have yet to be derived for electromagnetic fluctuations, these results suggest that the variance of turbulent magnetic fluctuations $\overline{\Sigma}$ also grows with profile gradients when those gradients exceed some threshold. These observations suggest a particular form for the excess energy.

The literature on profile stiffness generally assumes the magnetic field has nested flux surfaces and that the ``profiles" are constant on those surfaces. Supposing flux surfaces are level sets of some function $\psi$, the profiles therefore have the form $f(\psi)$, where $f$ is a single-variable function. Since $\psi = \psi_0\circ G$ for some reference flux function $\psi_0:Q_0\rightarrow\mathbb{R}$ it follows that gradients of $f(\psi)$ are proportional to $\mathbb{G}$ according to \[|\nabla f(\psi)|^2 = |f'(\psi)|^2\,\nabla\psi_0^T(G(\bm{x}))\mathbb{G}(\bm{x})\mathbb{G}^T(\bm{x})\nabla\psi_0(G(\bm{x})).\]   
This result, in conjunction with assumed validity of the profile stiffness picture above, implies that the $\overline{\Sigma}$ should be an increasing function of plasma strain. For suppose that $\|\mathbb{A} \|$ increases. Since $\| \mathbb{G}\| $ is bounded below by $\| \mathbb{A}\|^{1/2}$ the gradient norm $\|\mathbb{G} \|$ must eventually increase as well. Again invoking the assumed rapid fluctuations in the reference magnetic field, increasing $\|\mathbb{G} \|$ causes the RMS profile gradient, $\sqrt{\mathbb{E}[|\nabla f(\psi)|^2]}$, to increase. Profile stiffness then implies that the amplitude of magnetic fluctuations must increase. Thus, the magnetic variance should be an increasing function of $\| \mathbb{A}\|$. This result applies literally to nested magnetic topologies; in what follows it will be assumed the result remains valid in non-nested topologies. Detailed study of gyrokinetic fluctuation statistics in non-nested topologies remains a largely open area of research.

A simple model that embodies this rough physics picture decomposes the excess energy as the sum of a term linear in $\mathbb{A}^T\mathbb{A}$ and a homogeneous superlinear term. The linear term corresponds to the magnetic variance $\Sigma_0$ in the reference configuration. It is reasonable to assume $\Sigma_0$ is uniformly positive definite. The homogeneous superlinear term generally depends on a higher-degree reference tensor field. But under the simplifying assumption that the nonlinear term merely renormalizes the linear term, the following phenomenological model for the mean magnetic energy emerges,
\begin{gather}
    \mathsf{W}_{\text{mag}}(G) = \int \mathsf{w}_{\text{mag}}(G(\bm{x}),\mathbb{A}(\bm{x}),\mathcal{J}(\bm{x}))\,d^3\bm{x},\label{eq:phenom_mean_en}\\
    \mathsf{w}_{\text{mag}}(G,\mathbb{A},\mathcal{J}) = \frac{1}{2}\mu_0^{-1} |\bm{\mathsf{B}}|^2 + \frac{1}{2}\mu_0^{-1}\text{tr}\bigg(\Sigma + \frac{\Sigma^{\mexp}}{(\Delta B^2)^{\mexp-1}}\bigg),\label{eq:phenom_mean_en_den}\\
    \bm{\mathsf{B}} = \mathbb{A}\bm{\mathsf{B}}_0(G),\quad \Sigma = \mathbb{A}\Sigma_0(G)\mathbb{A}^T.\nonumber
\end{gather}
Here $\mexp > 1$ is an exponent that sets the homogeneous degree of the nonlinearity, $\Delta B^2$ is a characteristic magnetic variance above which the nonlinearity becomes important, the fields $\bm{\mathsf{B}}_0$, $\Sigma_0$, denote the mean reference magnetic field and the reference magnetic variance, and $\mathsf{w}_{\text{mag}}:Q_0\times\mathbb{R}^{3\times 3}\times\mathbb{R}\rightarrow \mathbb{R}\cup \{+\infty\}$ denotes the effective magnetic energy density. The parameters $\mexp$ and $\Delta B^2$ should be fit to the results of high-fidelity microturbulence simulations in various device configurations. The mean reference magnetic field $\bm{\mathsf{B}}_0$ should be determined by a suitable magnetic topology model. The reference magnetic variance $\Sigma_0$ should be determined by requiring the magnetic variance predicted by the equilibrium model agrees with the magnetic variance predicted by a gyrokinetic turbulence model.

When Newcomb's magnetic energy $\mathcal{W}_{\text{mag}}$ in  \,\eqref{eq:newcomb_mag_energy} is replaced by $\mathsf{W}_{\text{mag}}$ in  \,\eqref{eq:phenom_mean_en} the one-parameter family $G_\lambda$ introduced earlier ceases to be an escape sequence. The magnetic energy instead scales as $\lambda^{\mexp - 1}$ as $\lambda\rightarrow\infty$. Thus, including the excess energy of magnetic fluctuations simultaneously improves the physics fidelity of the equilibrium model and improves the potential energy functional's growth properties.

\section{Soft vacuum barrier\label{sec:vac_barrier}}
Including excess magnetic energy does not achieve coercivity by itself. The following argument constructs an escape sequence for Newcomb's potential with excess magnetic energy included. Let $\gamma_* = 1-\tfrac{1}{2\mexp}$ and note $0<\gamma_* < 1$. Consider a $1$-parameter family of back-to-labels maps $G_\lambda$ with parameter $\lambda\in\mathbb{R}$. Suppose that on a subregion of $Q$ with volume $V(\lambda)\propto \lambda^{-1}$ the derivative matrix is given by 
\begin{align*}
    \mathbb{G}_\lambda = \begin{pmatrix} \lambda &0&0\\ 0& \lambda^{-\gamma_*} & 0\\ 0&0&\lambda^{-1}\end{pmatrix}.
\end{align*}
Elsewhere the derivative matrix is $O(1)$. Now choose any $p>1$. The $p$-norm of the derivative matrix scales with $\lambda$ according to
\begin{align*}
\int_Q \| \mathbb{G}_\lambda\|^p\,d^3\bm{x}\propto \lambda^{p-1}.
\end{align*}
In particular, the $p$-norm of the gradient tends to $\infty$ as $\lambda\rightarrow\infty$. Coercivity of Newcomb's functional with excess magnetic energy included would therefore require $(\mathcal{W}_{\text{int}}+\mathsf{W}_{\text{mag}})(G_\lambda)\rightarrow\infty$ as $\lambda\rightarrow \infty$. To assess whether this actually happens, first note that the adjugate matrix and the Jacobian determinant in the region where $\mathbb{G}$ is large are given by
\begin{align*}
    \mathbb{A}_\lambda = \begin{pmatrix} \lambda^{-1-\gamma_*} &0&0\\ 0& 1 & 0\\ 0&0&\lambda^{1-\gamma_*}\end{pmatrix},\quad \mathcal{J}_\lambda = \lambda^{-\gamma_*}.
\end{align*}
By the superlinear dependence of $\mathsf{W}_{\text{mag}}(G_\lambda)$ on $\mathbb{A}_\lambda^T\mathbb{A}_\lambda$ the magnetic energy  therefore scales according to
\[\mathsf{W}_{\text{mag}}(G_\lambda)\propto \lambda^{2\mexp(1-\gamma_*)-1} = 1,\quad \lambda\rightarrow\infty.\]
Moreover, since $\mathcal{J}_\lambda\rightarrow 0$ and $G_\lambda$ is bounded the internal energy tends to $0$ as $\lambda\rightarrow\infty$. This shows that the one-parameter family $G_\lambda$ is an escape sequence for $\mathcal{W}_{\text{int}}+\mathsf{W}_{\text{mag}}$ for any $p>1$.

The origin of this escape sequence is failure of the internal energy or the modified magnetic energy to energetically penalize excessive rarefaction in thin regions. Very small densities are a well-known problem in MHD fluid modeling, especially in the high-energy-density plasma simulation community. When low densities become problematic it has become standard practice to introduce a density floor. A recent take on this idea \cite{farmer_numerical_2024} is described by Farmer et. al.. In addition, in the context of stellarator equilibrium modeling there is little evidence to suggest excessive rarefaction actually happens in reactor-relevant configurations. This all suggests that modifying the internal energy functional so that it sharply penalizes excessive rarefaction is practical, mathematically convenient, and physically harmless. 

It is in this spirit that the equilibrium model proposed in this work exchanges the standard ideal gas internal energy $\mathcal{W}_{\text{int}}$ with the modified internal energy $\mathsf{W}_{\text{int}}(G) = \int_Q \mathsf{w}_{\text{int}}(G,\mathbb{A},\mathcal{J})\,d^3\bm{x}$, where $\mathsf{w}_{\text{int}}:Q_0\times \mathbb{R}^{3\times 3}\times \mathbb{R}\rightarrow \mathbb{R}\cup\{+\infty\}$ is given by
\begin{gather}
    \mathsf{w}_{\text{int}}(G,\mathbb{A},\mathcal{J})=\rho\,\mathcal{U}^*(\rho,s) =\begin{cases} \frac{\overline{p}(s)}{\overline{\rho}^\gamma}\frac{\rho^{\gamma}}{\gamma-1} + \frac{\varepsilon_{\text{vac}}\,\rho_{\text{vac}}^{\vexp}}{\rho^{\vexp}}, & \rho > 0\\ +\infty & \rho\leq 0.\end{cases}\label{modified_int_en_den}\\
    \rho=\mathcal{J}\,\rho_0(G), \quad s = s_0(G)\nonumber. 
\end{gather}
Here $\gamma > 1$ denotes the adiabatic index, $\overline{p}(s)\propto \exp(s/C_v)$ denotes a characteristic entropy-dependent pressure, and $\overline{\rho}$ denotes a (constant, positive) characteristic mass density. The positive constants $\varepsilon_{\text{vac}}$ and $\rho_{\text{vac}}$ control the strength and mass density scale for the vacuum barrier; decreasing mass density below $\rho_{\text{vac}}$ incurs an energy cost proportional to $\varepsilon_{\text{vac}}$. The barrier abruptness is controlled by the exponent $\vexp$. Eventually there will be a further constraint on $\vexp$, but for now it is sufficient to assume $\vexp > 0$. Note that $\mathsf{w}_{\text{int}}$ is manifestly non-negative. Also note that the vacuum barrier term $\propto \rho^{-\vexp}$ ensures $G_\lambda$ above is no longer an escape sequence.

It will become clear in Section \ref{sec:coercivity} that the pair of modifications to Newcomb's potential, $\mathcal{W}_{\text{int}}\rightarrow\mathsf{W}_{\text{int}}$ and $\mathcal{W}_{\text{mag}}\rightarrow \mathsf{W}_{\text{mag}}$, together produce a coercive potential energy functional.

\section{Model formulation\label{sec:model_formulation}}

Let $Q\subset\mathbb{R}^3$ denote a compact region in $\mathbb{R}^3$ with $C^\infty$ boundary. Denote the volume of $Q$ as $L^3$, which gives $L$ the interpretation of a characteristic domain length scale. Let $W^{1,p}(Q,\mathbb{R}^3)$ denote the (real) linear space of functions $G:Q\rightarrow \mathbb{R}^3$ such that the $p$-norm of $G$ and its first derivative are each finite. Eventually the analysis will select a specific $p$-value. For now it suffices to assume $3 < p <\infty$, which ensures in particular that elements of $W^{1,p}(Q,\mathbb{R}^3)$ have continuous representatives. If $\mathbb{G}(\bm{x}) = DG(\bm{x})$ denotes the $3\times 3$ matrix whose columns equal the partial derivatives of $G$ at $\bm{x}\in Q$, $G\in W^{1,p}(Q,\mathbb{R}^3)$ is equivalent to $\int_Q | G(\bm{x})|^p\,d^3\bm{x} < \infty$ and $\int_Q \| \mathbb{G}(\bm{x})\|^p\,d^3\bm{x} <\infty$,
where $ | \cdot |$ denotes the standard Euclidean norm on $\mathbb{R}^3$ and $\|\cdot\|$ denotes the Frobenius norm on $3\times 3$ matrices. For the allowed range of $p$, it is a well-known fact that $W^{1,p}(Q,\mathbb{R}^3)$ is a reflexive Banach space. Of the many possible equivalent norms on $W^{1,p}(Q,\mathbb{R}^3)$ this work uses
\begin{align*}
    \| G\|_{1,p} = \bigg(\frac{1}{L^{3 + p}}\int_Q |G(\bm{x})|^p\,d^3\bm{x} + \frac{1}{L^3}\int_Q\|\mathbb{G}(\bm{x}) \|^p\,d^3\bm{x}\bigg)^{1/p},
\end{align*}
which has the benefit of being dimensionless. Let $Q_0\subset \mathbb{R}^3$ denote another region in $\mathbb{R}^3$ $C^\infty$-diffeomorphic to $Q$. Let $\mathcal{Z}\subset W^{1,p}(Q,\mathbb{R}^3)$ be the subset of $W^{1,p}(Q,\mathbb{R}^3)$ comprising maps $G\in W^{1,p}(Q,\mathbb{R}^3)$ that satisfy the three conditions
\begin{description}
    \item[BL1:] $G(Q)\subset Q_0$
    \item[BL2:] $G(\partial Q)\subset \partial Q_0$
    \item[BL3:] $\int_Q \text{det}\,\mathbb{G}(\bm{x})\,d^3\bm{x} \leq \int_{G(Q)}d^3\bm{x}_0$.
\end{description}
Note that BL3 is known as the Ciarlet-Ne\v{c}as \cite{ciarletInjectivitySelfcontactNonlinear1987} condition in the field of nonlinear elasticity.  Also note that the upper bound in $\textbf{BL3}$ exists because $G$ is continuous for $p > 3$, which implies $G(Q)$ is measurable.
The set $\mathcal{Z}$ serves as a viable fluid configuration space for the plasma. Note that $G$ in the complement of $\mathcal{Z}$ are not valid back-to-labels maps because all fluid parcel labels must be contained in the reference spatial domain $Q_0$.

\textbf{Elasticized magnetohydrostatics --- } The potential energy is the elasticized Newcomb potential $\mathsf{W}:\mathcal{Z}\rightarrow \mathbb{R}\cup \{+\infty\}$ defined as the sum of the elasticized magnetic energy given in \eqref{eq:phenom_mean_en} and the modified internal energy given in \eqref{modified_int_en_den}.
The following provides a detailed specification of the energy functional $\mathsf{W}$ in terms of a potential energy density $\mathsf{w}$. 

Let $\text{Sym}(3)$ denote the space of symmetric $3\times 3$ matrices. Let $\rho_0:Q_0\rightarrow\mathbb{R}$, $s_0:Q_0\rightarrow \mathbb{R}$, $\bm{\mathsf{B}}_0:Q_0\rightarrow\mathbb{R}^3$, and $\Sigma_0:Q_0\rightarrow \text{Sym}(3)$ be fixed fields defined over the reference domain $Q_0$. The reference fields will always satisfy the following assumption.
\begin{assumption}\label{assumption1}
    Each of the reference fields $\rho_0,s_0,\bm{\mathsf{B}}_0,\Sigma_0$ is $C^\infty$. In addition $\bm{\mathsf{B}}_0 = \nabla\times\bm{\mathsf{A}}_0$ for some $C^\infty$ vector potential $\bm{\mathsf{A}}_0$, $\Sigma_0$ is uniformly positive definite, and $\rho_0$ is positive pointwise. 
\end{assumption}
\begin{remark}
    The field $\bm{\mathsf{B}}_0$ need not be tangent to $\partial Q_0$, but Assumption \ref{assumption1} does allow for it. Tangency of $\bm{\mathsf{B}}_0$ to $\partial Q_0$ is a common confinement modeling assumption, but it will play no role in the rigorous existence theory.
\end{remark}

The elasticized potential energy density $\mathsf{w}:Q_0\times \mathbb{R}^{3\times 3}\times \mathbb{R}\rightarrow \mathbb{R}\cup\{+\infty\}$ is given by
\begin{align*}
    \mathsf{w}(G,\mathbb{A},\mathcal{J}) = \mathsf{w}_{\text{int}}(G,\mathbb{A},\mathcal{J}) + \mathsf{w}_{\text{mag}}(G,\mathbb{A},\mathcal{J}), 
\end{align*}
where $\mathsf{w}_{\text{int}}$ and $\mathsf{w}_{\text{mag}}$ are defined in \eqref{modified_int_en_den} and \eqref{eq:phenom_mean_en_den}. Note that both $\mathsf{w}_{\text{int}}$ and $\mathsf{w}_{\text{mag}}$ depend on the reference fields. The corresponding elasticized potential energy is
\begin{gather}
    \mathsf{W}(G) = \int_Q\mathsf{w}(G(\bm{x}),\mathbb{A}(\bm{x}),\mathcal{J}(\bm{x}))\,d^3\bm{x} ,\label{potential_energy_defined}\\
    \mathcal{J}(\bm{x}) = \text{det}\,\mathbb{G}(\bm{x}),\quad \mathbb{A}(\bm{x}) = \text{adj}\,\mathbb{G}(\bm{x})\nonumber.
\end{gather}

Elasticized magnetohydrostatics defines equilibria as follows.
\begin{definition}\label{equilibrium_def}
    A back-to-labels map $G\in\mathcal{Z}$ is an \textbf{equilibrium} if it is a minimizer for $\mathsf{W}$. Equivalently, $G$ is an equilibrium if, for every $G'\in\mathcal{Z}$, $\mathsf{W}(G)\leq \mathsf{W}(G')$.
\end{definition}
\begin{remark}
The definition \ref{equilibrium_def} defines equilibria in the model as ground states for the potential energy. The energy functional $\mathsf{W}$ may therefore have local minima not labeled as equilibria by the definition. As stated, the results proved in this work concerning regularity of equilibria do not apply to local minima that are not global minima. Generically one expects a single global minimizer, and therefore a unique equilibrium. However, there may be special model inputs (domain boundary and reference fields) that produce multiple global minimizers.
\end{remark}

\section{The potential energy is norm-Coercive\label{sec:coercivity}}
It is clear from the definition of $\mathsf{W}$ that $\mathsf{W}(G)\geq 0$ for all $G\in\mathcal{Z}$, as would be expected for a physical energy. In fact $\mathsf{W}$ satisfies a much stronger positivity property known as coercivity when $p$ is appropriately related to the vacuum and variance exponents, $\vexp,\mexp$.

\begin{theorem}\label{thm:coercivity}
    Suppose \[3 < \frac{ \vexp \mexp}{\vexp+\mexp} = p.\] Then the functional $\mathsf{W}:\mathcal{Z}\rightarrow \mathbb{R}\cup\{+\infty\}$ is $W^{1,p}$-norm-coercive. In fact, there are $G$-independent positive constants $c_0,c_1$ such that for each $G\in\mathcal{Z}$
    \begin{align}
        \mathsf{W}(G)\geq  \frac{\| G\|_{1,p}^p - c_0}{c_1}.\label{norm_coercivity_inequality}
    \end{align}
\end{theorem}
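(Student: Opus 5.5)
The approach is a pointwise algebraic inequality: control $\|\mathbb{G}\|^p$ by the energy density $\mathsf{w}$, then integrate. The term $\int_Q|G|^p$ is handled by BL1, since $Q_0$ is bounded and so $|G|\le R_0$; this contributes a constant $R_0^pL^{-p}$ to $\|G\|_{1,p}^p$, which is absorbed into $c_0$. It therefore suffices to prove that there are constants $a,b>0$ with
\[
\|\mathbb{G}\|^p \le a\,\mathsf{w}(G,\mathbb{A},\mathcal{J}) + b
\]
pointwise, wherever $\mathcal{J}>0$. Where $\mathcal{J}\le 0$ we have $\mathsf{w}=+\infty$, so either $\mathsf{W}(G)=\infty$ and the claim is trivial, or $\mathcal{J}>0$ almost everywhere.

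\textbf{Step 1: lower bounds from the two energy terms.} Use the uniform positive definiteness $\Sigma_0\ge \sigma I$ from Assumption \ref{assumption1}, together with $\operatorname{tr}(\Sigma^{\mexp})\ge 3^{1-\mexp}(\operatorname{tr}\Sigma)^{\mexp}$, which is convexity of $t^{\mexp}$ on eigenvalues. This gives
\[
\mathsf{w}_{\text{mag}}\ge c\,(\operatorname{tr}\mathbb{A}\mathbb{A}^T)^{\mexp}=c\,\|\mathbb{A}\|^{2\mexp}.
\]
Since $\rho_0$ is continuous and positive on the compact set $Q_0$, it has a positive upper bound, so the vacuum barrier gives
\[
\mathsf{w}_{\text{int}}\ge c'\,\mathcal{J}^{-\vexp}.
\]

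\textbf{Step 2: algebraic link.} Cramer's rule gives $\mathbb{G}=\mathcal{J}\,\mathbb{A}^{-1}$, and $\operatorname{adj}\mathbb{A}=\mathcal{J}\,\mathbb{G}$ in three dimensions. The entries of $\operatorname{adj}\mathbb{A}$ are quadratic in those of $\mathbb{A}$, so
\[
\|\mathbb{G}\|\le C\,\|\mathbb{A}\|^2\,\mathcal{J}^{-1}.
\]
Then apply Young's inequality with exponents $r=2\mexp/(2p)=\mexp/p$ and $r'$ for $\|\mathbb{A}\|^{2p}\mathcal{J}^{-p}$. This requires $\mathcal{J}^{-pr'}$ to match $\mathcal{J}^{-\vexp}$, that is $pr'=\vexp$. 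With $1/r+1/r'=1$ this is equivalent to $p/\mexp+p/\vexp=1$, i.e. $p=\vexp\mexp/(\vexp+\mexp)$, which is exactly the hypothesis. The result is
\[
\|\mathbb{G}\|^p\le C^p\big(\|\mathbb{A}\|^{2\mexp}/r+\mathcal{J}^{-\vexp}/r'\big)\le a\,\mathsf{w}.
\]
Here $b=0$ already suffices.

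\textbf{Step 3: integrate.} Integrating over $Q$ gives
\[
\|G\|_{1,p}^p\le R_0^pL^{-p}+aL^{-3}\,\mathsf{W}(G),
\]
which rearranges to \eqref{norm_coercivity_inequality}.

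The main obstacle is getting the exponent bookkeeping in Step 2 right. In particular, I would double-check the sharp bound $\|\mathbb{G}\|\lesssim\|\mathbb{A}\|^2/\mathcal{J}$, rather than a weaker one using $\|\mathbb{A}\|^{1/2}$, since it is this bound that makes the hypothesis on $p$ come out exactly. BL2 and BL3 are not needed here.
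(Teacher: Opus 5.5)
Your proposal is correct and follows the paper's proof. It uses the same adjugate bound $\|\mathbb{G}\|\lesssim\|\mathbb{A}\|^2/\mathcal{J}$, the same lower bounds $\mathsf{W}\gtrsim\int_Q\mathcal{J}^{-\vexp}$ and $\mathsf{W}\gtrsim\int_Q\|\mathbb{A}\|^{2\mexp}$ (via $\rho_{0,\max}$, uniform positivity of $\Sigma_0$, and convexity of $t^{\mexp}$), and the same exponent matching $p/\vexp+p/\mexp=1$. The only difference is that you apply Young's inequality pointwise, whereas the paper applies H\"older's inequality to the integrals and bounds each factor by $\mathsf{W}$; both give a bound linear in $\mathsf{W}$, and yours yields the slightly stronger pointwise estimate $\|\mathbb{G}\|^p\le a\,\mathsf{w}$.
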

\begin{remark}\label{rem:p_remark}
    Since $\frac{ \vexp \mexp}{\vexp+\mexp}$ is strictly less than either $\vexp$ or $\mexp$ it follows that the Theorem's hypothesis requires, in particular, $\vexp >3$ and $\mexp>3$. The vacuum barrier is artificial, so making it more narrow by increasing $\vexp$ is desirable. On the other hand, since the amplitude scaling of electromagnetic fluctuations with supercritical profile gradients remains an open question, it is possible that $\mexp > 3$ is unrealistic. This represents a genuine weakness of the rigorous results presented here. However, numerical experiments in Section \ref{sec:discussion} suggest this is a weakness in the analytical methods used to obtain the results, rather than a weakness in the model itself.
\end{remark}

\begin{proof}
Let $G\in\mathcal{Z}$ and let $\mathbb{G}$ denote its weak derivative. The coercivity inequality \eqref{norm_coercivity_inequality} is trivially satisfied if $\mathsf{W}(G) = +\infty$. So assume that $\mathsf{W}(G) < \infty$. This implies that the modified internal energy must be finite, which in turn implies that $\mathcal{J}(\bm{x}) > 0$ for almost all $\bm{x}\in Q$.

Observe that if $\mathbb{M}$ is any invertible $3\times 3$ matrix then its Frobenius norm is bounded from above according to
\begin{align}
    \| \mathbb{M}\|^2\leq \frac{1}{3}\frac{\| \text{adj}\,\mathbb{M}\|^4}{(\text{det}\,\mathbb{M})^2},\label{pointwise_adjugate_bound}
\end{align}
where $\text{adj}\,\mathbb{M}$ denotes the adjugate of $\mathbb{M}$. Let $1 < a,b <\infty$ be real numbers that satisfy $1/a + 1/b=1$. By \eqref{pointwise_adjugate_bound} and H\"older's inequality the $p$-norm of $\mathbb{G}$ satisfies
\begin{align}
    \int_Q\| \mathbb{G}\|^p\,d^3\bm{x}  &\leq 3^{-p/2} \int_Q \frac{\| \mathbb{A}\|^{2p}}{\mathcal{J}^p}\,d^3\bm{x}\nonumber\\
    &\leq 3^{-p/2}\bigg(\int_Q\frac{1}{\mathcal{J}^{ap}}\,d^3\bm{x}\bigg)^{1/a}\bigg(\int_Q\| \mathbb{A}\|^{2pb}\,d^3\bm{x}\bigg)^{1/b}.\label{proto_coercivity}
\end{align}
Now by the definition \eqref{modified_int_en_den} of the modified internal energy density and non-negativity of the modified magnetic energy density,
\begin{align*}
    \mathsf{W}(G)\geq \int_Q\frac{\varepsilon_{\text{vac}}\rho_{\text{vac}}^{\vexp}}{(\mathcal{J}\rho_0(G))^{\vexp}}\,d^3\bm{x}\geq \frac{\varepsilon_{\text{vac}}\rho_{\text{vac}}^{\vexp}}{\rho_{0,\text{max}}^{\vexp}}\int \frac{1}{\mathcal{J}^{\vexp}}\,d^3\bm{x},
\end{align*}
where $\rho_{0,\text{max}}$ denotes the maximum reference mass density. By assumed continuity and point-wise positivity of $\rho_0$, $\rho_{0,\text{max}}^{\vexp} > 0$.
This bound suggests choosing $a$ in H\"older's inequality so that $ap = \vexp$, which is permissible because then $a = \vexp / p > 1$, by hypothesis. (Notice that $\vexp\,\mexp/(\vexp+\mexp) < \vexp$.) Proceeding in this manner, the bound in \eqref{proto_coercivity} on $\mathbb{G}$'s $p$-norm becomes
\begin{align}
    \int_Q\| \mathbb{G}\|^p\,d^3\bm{x}  \leq 3^{-p/2}\frac{\rho_{0,\text{max}}^p}{\varepsilon_{\text{vac}}^{\frac{p}{\vexp}}\rho_{\text{vac}}^p}\,\mathsf{W}(G)^{\frac{p}{\vexp}}\,\bigg(\int_Q \| \mathbb{A}\|^{2\mexp}\,d^3\bm{x}\bigg)^{\frac{\vexp - p}{\vexp}}.\label{proto_coercivity_v1}
\end{align}
Now let $\sigma_{\text{min}}$ be a uniform lower bound on the eigenvalues of $\Sigma_0$. The integrand $\| \mathbb{A}\|^{2\mexp}$ relates to $\Sigma^{\mexp}$ according to
\begin{align}
    \| \mathbb{A}\|^{2\mexp} &= [\text{tr}(\mathbb{A}^T\mathbb{A})]^{\mexp}\leq \frac{1}{\sigma_{\text{min}}^{\mexp}}[\text{tr}(\mathbb{A}\Sigma_0\mathbb{A}^T)]^{\mexp}\nonumber\\
    &\leq \frac{3^{\mexp - 1}}{\sigma_{\text{min}}^{\mexp}}\text{tr}([\mathbb{A}\Sigma_0\mathbb{A}^T]^{\mexp}) = \frac{3^{\mexp - 1}}{\sigma_{\text{min}}^{\mexp}}\text{tr}(\Sigma^{\mexp}).\label{A_to_sigma_bound}
\end{align}
Here the first inequality follows from uniform positive definiteness of $\Sigma_0$, while the second inequality follows from Jensen's inequality applied to the convex function $f(x) = x^{\mexp}$. From \eqref{A_to_sigma_bound} it follows that
\begin{align}
    \mathsf{W}(G)&\geq \frac{1}{2}\mu_0^{-1}\int_Q \frac{\text{tr}(\Sigma^{\mexp})}{(\Delta B^2)^{\mexp -1}}\,d^3\bm{x}\nonumber\\
    &\geq\frac{\mu_0^{-1}}{2\cdot 3^{\mexp-1}}\frac{\sigma_{\text{min}}^{\mexp}}{(\Delta B^2)^{\mexp - 1}}\int_Q \| \mathbb{A} \|^{2\mexp}\,d^3\bm{x}.\label{mag_en_bd}
\end{align}
Now substituting \eqref{mag_en_bd} into \eqref{proto_coercivity_v1} implies
\begin{gather}
    \int_Q \| \mathbb{G}\|^p\,d^3\bm{x} \leq 3^{p/2}\,\bigg(\frac{\rho_{0,\text{max}}}{\rho_{\text{vac}}}\frac{\Delta B^2}{\sigma_{\text{min}}}\bigg)^p\,\frac{1}{\left(\varepsilon_{\text{vac}}\right)^{p/\vexp}}\frac{1}{\left(\frac{3}{2}\mu_0^{-1}\,\Delta B^2\right)^{\frac{\vexp-p}{\vexp}}}\,\mathsf{W}(G).\label{concrete_gradient_bound}
\end{gather}

To finish the proof, note that boundedness of the reference spatial domain $Q_0$ and $G(Q)\subset Q_0$ (property $\textbf{BL1}$) implies that there is some radius $R_0$ such that $|G(\bm{x})| < R_0$ for all $\bm{x}\in Q$. Therefore $\int_Q | G(\bm{x})|^p\,d^3\bm{x}\leq R_0^p\,L^3$. In conjunction with \eqref{concrete_gradient_bound} this implies
\begin{align*}
    \| G\|_{1,p}^p &= \frac{1}{L^{3 + p}}\int_Q |G(\bm{x})|^p\,d^3\bm{x} + \frac{1}{L^3}\int_Q\|\mathbb{G}(\bm{x}) \|^p\,d^3\bm{x}\\
    &\leq \frac{R_0^p}{L^p} + \frac{3^{p/2}}{L^3}\,\bigg(\frac{\rho_{0,\text{max}}}{\rho_{\text{vac}}}\frac{\Delta B^2}{\sigma_{\text{min}}}\bigg)^p\,\frac{1}{\left(\varepsilon_{\text{vac}}\right)^{p/\vexp}}\frac{1}{\left(\frac{3}{2}\mu_0^{-1}\,\Delta B^2\right)^{\frac{\vexp-p}{\vexp}}}\,\mathsf{W}(G)
\end{align*}
which is the desired result. In particular, the positive constants $c_0,c_1$ are given by
\begin{align}
    c_0 &= \frac{R_0^p}{L^p}\label{c0_eqn}\\
    c_1 & = \frac{3^{p/2}}{L^3}\,\bigg(\frac{\rho_{0,\text{max}}}{\rho_{\text{vac}}}\frac{\Delta B^2}{\sigma_{\text{min}}}\bigg)^p\,\frac{1}{\left(\varepsilon_{\text{vac}}\right)^{p/\vexp}}\frac{1}{\left(\frac{3}{2}\mu_0^{-1}\,\Delta B^2\right)^{\frac{\vexp-p}{\vexp}}}.\label{c1_eqn}
\end{align}

\end{proof}

\section{Existence theory\label{sec:existence_theory}}
With coercivity of the elasticized potential established, existence of minimizers now follows from standard arguments from mathematical elasticity. First it is necessary to establish the following Lemmas.
\begin{lemma}\label{closed_thm}
Assume $p>3$. The optimization domain $\mathcal{Z}$ is a weakly sequentially closed subset of $W^{1,p}(Q,\mathbb{R}^3)$.
\end{lemma}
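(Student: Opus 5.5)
Take a sequence $G_k\in\mathcal{Z}$ with $G_k\rightharpoonup G$ weakly in $W^{1,p}(Q,\mathbb{R}^3)$, and check \textbf{BL1}--\textbf{BL3} for $G$ one at a time. The tools are the compact Rellich--Morrey embedding $W^{1,p}\hookrightarrow C^0(Q)$ for $p>3$ on the smooth bounded domain $Q$, and weak continuity of the determinant.

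First, uniform convergence. Weakly convergent sequences are bounded, and the compact embedding turns weak $W^{1,p}$ convergence into strong convergence in $C^0(Q,\mathbb{R}^3)$. Passing to the continuous representatives, $G_k\to G$ uniformly on $Q$.

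Second, \textbf{BL1} and \textbf{BL2}. For each $\bm{x}\in Q$ we have $G(\bm{x})=\lim_k G_k(\bm{x})$ with $G_k(\bm{x})\in Q_0$. Since $Q_0$ is compact, hence closed, $G(\bm{x})\in Q_0$. The same argument with the closed set $\partial Q_0$ gives $G(\partial Q)\subset\partial Q_0$.

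Third, \textbf{BL3}, which is the main obstacle. The left side passes to the limit by the classical weak continuity of determinants (Ball; Reshetnyak). For $p>3$, $\det DG_k\rightharpoonup\det DG$ weakly in $L^{p/3}(Q)$. Integrating against the constant function $1\in L^{(p/3)'}$ gives
\[\int_Q\det\mathbb{G}_k\,d^3\bm{x}\to\int_Q\det\mathbb{G}\,d^3\bm{x}.\]
The right side is the delicate part, because the volume of an image set $|G(Q)|$ is only upper semicontinuous under uniform convergence. I would argue as follows. For $\epsilon>0$, let $K_\epsilon=\{\bm{y}:\mathrm{dist}(\bm{y},G(Q))\le\epsilon\}$. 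Uniform convergence gives $G_k(Q)\subset K_\epsilon$ for all large $k$. Since $G(Q)$ is compact, $\bigcap_\epsilon K_\epsilon=G(Q)$, so $|K_\epsilon|\downarrow|G(Q)|$. Therefore
\[\limsup_k|G_k(Q)|\le|G(Q)|.\]
Combining this with the determinant limit,
\[\int_Q\det\mathbb{G}\,d^3\bm{x}=\lim_k\int_Q\det\mathbb{G}_k\,d^3\bm{x}\le\limsup_k|G_k(Q)|\le|G(Q)|,\]
which is \textbf{BL3} for $G$.

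If weak continuity of determinants is not taken as given, I would prove it with the divergence identity $\det DG=\mathrm{div}\big(G^1\,\mathrm{cof}\,\text{row}_1\big)$ for smooth $G$. This extends by density and lets one write $\int\phi\,\det DG$ as a sum of products in which one factor converges uniformly and the other converges weakly. Only the test function $\phi\equiv1$ is actually needed here, but the boundary term $\phi\equiv1$ produces must then be handled. So I would instead use compactly supported $\phi$ approximating $1$, together with equi-integrability of $\det\mathbb{G}_k$ in $L^{p/3}$, to control the region near $\partial Q$.

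Since closedness under weak sequential limits is what is claimed, no convexity of $\mathcal{Z}$ is needed.
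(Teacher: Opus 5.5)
Your proof is correct and follows the same route as the paper's. The compact embedding gives uniform convergence, which settles \textbf{BL1}--\textbf{BL2}; for \textbf{BL3} you test the weak $L^{p/3}$ convergence of $\det\mathbb{G}_k$ against the constant $1$ and combine it with upper semicontinuity of $|G_k(Q)|$ under uniform convergence. The only cosmetic difference is that the paper gets its neighborhood of $G(Q)$ from outer regularity of Lebesgue measure (an open $O_\epsilon$ containing a $\delta$-neighborhood), whereas you use the closed neighborhoods $K_\epsilon$ and continuity of measure from above.
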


\begin{proof}
    Let $G_n\in W^{1,p}(Q,\mathbb{R}^3)$ be a weakly-convergent sequence such that $G_n\in\mathcal{Z}$ for all $n$. Since we always assume $p > 3$ and $Q$ is closed the Rellich-Kondrachov theorem implies that $G_n$ converges uniformly to a continuous map $G:Q\rightarrow\mathbb{R}^3$. We must show that the limit map $G$ satisfies properties $\textbf{BL1}$-$\textbf{BL3}$.
\\ \\
    \noindent\textbf{BL1}: Let $\bm{x}\in Q$. The sequence $G_n(\bm{x})$ is contained in $Q_0$. By uniform convergence of $G_n$ to $G$ the sequence $G_n(\bm{x})$ converges to $G(\bm{x})$. By closedness of $Q_0$ the limit of $G_n(\bm{x})$, which is $G(\bm{x})$, is contained in $Q_0$. Thus $G(Q)\subset Q_0$.
    \\ \\
    \noindent\textbf{BL2:} Let $\bm{x}\in \partial Q$. The sequence $G_n(\bm{x})$ is contained in $\partial Q_0$. By uniform convergence of $G_n$ to $G$ the sequence $G_n(\bm{x})$ converges to $G(\bm{x})$. By closedness of $\partial Q_0$ the limit of $G_n(\bm{x})$, which is $G(\bm{x})$, is contained in $\partial Q_0$. Thus $G(\partial Q)\subset \partial Q_0$.
    \\ \\
    \noindent \textbf{BL3:} We recall an argument due to Ciarlet-Ne\v{c}as \cite{ciarletInjectivitySelfcontactNonlinear1987} from the proof of their Theorem 5, which we will recount in some detail. We do this to clarify that we need not assume $\mathcal{J}_n$ is non-negative almost everywhere, which is not explicitly stated by Ciarlet-Ne\v{c}as. 
    
    For $\delta > 0$ let $B_\delta(\bm{x})$ denote the open ball with radius $\delta$ at $\bm{x}\in\mathbb{R}^3$. Since $G$ is continuous and $Q$ is compact, $G(Q)$ is compact, and hence measurable. It follows that for every $\epsilon >0$ there is an open set $O_\epsilon$ with $G(Q)\subset O_\epsilon$ and $\int_{O_\epsilon - G(Q)}\,d^3\bm{x} < \epsilon$.  In fact there is a $\delta >0$ such that $\cup_{\bm{x}\in G(Q)}B_\delta(\bm{x})\subset O_\epsilon$. Uniform convergence of $G_n$ then implies that $G_n(Q)\subset O_\epsilon$ for all sufficiently large $n$. Combining this with $\int_{Q}\mathcal{J}_n\,d^3\bm{x}\leq \int_{G_n(Q)}d^3\bm{x}_0$ then gives $\int_Q\mathcal{J}_n\,d^3\bm{x}\leq \int_{O_\epsilon}d^3\bm{x}$ for all sufficiently large $n$. 
    
    Let $r = p/3.$ The next step is to use the well-known fact that the mapping $W^{1,p}(Q,\mathbb{R}^3)\rightarrow L^r(Q)$ given by $G\mapsto \mathcal{J}$ is sequentially weakly continuous when $p>3$. (This is described for example by Ball\cite{ballConvexityConditionsExistence1976} in his Corollary 6.2.2.) Thus, by weak-convergence of $G_n$ in $W^{1,p}(Q,\mathbb{R}^3)$, the sequence $\mathcal{J}_n$ converges weakly in $L^r(Q)$ to $\mathcal{J}$. In particular, since the unit constant function $1$ is a member of the continuous dual $(L^r(Q))^*$, 
    \[\int_Q \mathcal{J}\,d^3\bm{x} = \lim_{n\rightarrow\infty}\int_Q\mathcal{J}_n\,d^3\bm{x}\leq \int_{O_\epsilon}\,d^3\bm{x}_0 = \int_{G(Q)}d^3\bm{x}_0 + \int_{O_\epsilon - G(Q)}d^3\bm{x}_0 <\int_{G(Q)}\,d^3\bm{x}_0 + \epsilon.\]
    Since this inequality must be satisfied for all $\epsilon >0$ we conclude that property \textbf{BL3} is satisfied by $G$, as claimed.

\end{proof}

\begin{lemma}\label{thm:wlsc}
    Assume $p>3$. If $G_n$ is a weakly-convergent sequence in $W^{1,p}(Q,\mathbb{R}^3)$ with weak limit $G$ and $G_n\in \mathcal{Z}$ for all $n$ then
    \begin{align*}
        \mathsf{W}(G)\leq \liminf_{n\rightarrow\infty} \,\mathsf{W}(G_n).
    \end{align*}
\end{lemma}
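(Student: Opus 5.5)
This is a polyconvexity argument in the spirit of Ball's existence theory.

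\textbf{Structure of the density.} The density $\mathsf{w}(\bm{x}_0,\mathbb{A},\mathcal{J})$ has three properties:
\begin{itemize}
\item it is jointly lower semicontinuous in $(\bm{x}_0,\mathbb{A},\mathcal{J})$ with values in $\mathbb{R}\cup\{+\infty\}$, and it is continuous in $\bm{x}_0$ because the reference fields are $C^\infty$;
\item it is convex in $(\mathbb{A},\mathcal{J})$ for fixed $\bm{x}_0$;
\item it is nonnegative.
\end{itemize}
The convexity checks are as follows.
\begin{itemize}
\item $\mathbb{A}\mapsto|\mathbb{A}\bm{\mathsf{B}}_0|^2$ is a convex quadratic.
\item $\mathbb{A}\mapsto\mathrm{tr}(\mathbb{A}\Sigma_0\mathbb{A}^T)$ is a convex quadratic because $\Sigma_0$ is positive definite.
\item $\mathbb{A}\mapsto\mathrm{tr}((\mathbb{A}\Sigma_0\mathbb{A}^T)^{\mexp})$ is convex. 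Write it as $\|\mathbb{A}\Sigma_0^{1/2}\|_{S_{2\mexp}}^{2\mexp}$, a Schatten norm composed with a linear map and raised to a power $\geq1$.
\item $\mathcal{J}\mapsto \mathcal{J}^\gamma$ and $\mathcal{J}\mapsto\mathcal{J}^{-\vexp}$ are convex on $(0,\infty)$. Extending them by $+\infty$ for $\mathcal{J}\le0$ keeps them convex and lower semicontinuous, with positive coefficients depending smoothly on $\bm{x}_0$.
\end{itemize}

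\textbf{Convergence of the minors.} Let $G_n\rightharpoonup G$ in $W^{1,p}$ with $p>3$. Rellich--Kondrachov gives $G_n\to G$ uniformly, as in Lemma \ref{closed_thm}. By the weak continuity of minors (Ball), $\mathbb{A}_n=\mathrm{adj}\,\mathbb{G}_n\rightharpoonup\mathbb{A}$ in $L^{p/2}$ and $\mathcal{J}_n\rightharpoonup\mathcal{J}$ in $L^{p/3}$.

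\textbf{Reduction to a subsequence.} Pass to a subsequence realizing $\liminf\mathsf{W}(G_n)$, and assume this value is finite; otherwise there is nothing to prove.

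\textbf{Freezing the reference point.} First handle the dependence on $G_n$ in the first slot. By uniform convergence, $G_n(\bm{x})\to G(\bm{x})$ uniformly. I would like to replace $\mathsf{w}(G_n,\mathbb{A}_n,\mathcal{J}_n)$ with $\mathsf{w}(G,\mathbb{A}_n,\mathcal{J}_n)$ at small cost. The density has the form $\sum_k c_k(\bm{x}_0)\,\phi_k(\mathbb{A},\mathcal{J})$, with $\phi_k$ convex and nonnegative. In the quadratic and power terms the coefficients are $\bm{\mathsf{B}}_0$ and $\Sigma_0$ entering inside, not factored out. For these I would use uniform positive definiteness of $\Sigma_0$ and smoothness to get a pointwise estimate of the form
\[
\mathsf{w}(\bm{y},\mathbb{A},\mathcal{J})\le(1+\omega(|\bm{y}-\bm{x}_0|))\,\mathsf{w}(\bm{x}_0,\mathbb{A},\mathcal{J})+\omega(|\bm{y}-\bm{x}_0|).
\]
Here $\omega$ is a modulus of continuity. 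The estimate follows because perturbing $\Sigma_0$ to $\Sigma_0(\bm{y})$ satisfies $\Sigma_0(\bm{y})\le(1+C|\bm{y}-\bm{x}_0|)\Sigma_0(\bm{x}_0)$, and the matrix power $\mathrm{tr}(\cdot)^{\mexp}$ is monotone. The $\bm{\mathsf{B}}_0$ term is handled by $|\mathbb{A}\bm{b}|^2\le(1+\eta)|\mathbb{A}\bm{b}'|^2+(1+\eta^{-1})|\mathbb{A}|^2|\bm{b}-\bm{b}'|^2$, and $|\mathbb{A}|^2$ is absorbed into the $\Sigma$ term. Since $\mathsf{W}(G_n)$ is bounded, the error after integrating tends to $0$. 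Hence
\[
\liminf\mathsf{W}(G_n)\ge\liminf\int_Q\mathsf{w}(G(\bm{x}),\mathbb{A}_n,\mathcal{J}_n)\,d^3\bm{x}.
\]

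\textbf{Weak lower semicontinuity.} The right-hand side is now a normal convex integrand in the weakly converging variables $(\mathbb{A}_n,\mathcal{J}_n)$, with a fixed measurable $\bm{x}$-dependence through $G(\bm{x})$. The standard theorem on weak lower semicontinuity of integrals of nonnegative normal integrands that are convex in the weakly converging variable (Ioffe, or Dacorogna Thm.\ 3.23) then gives $\liminf\ge\int_Q\mathsf{w}(G,\mathbb{A},\mathcal{J})=\mathsf{W}(G)$. Alternatively, one can argue directly with Mazur's lemma: convex combinations of the tails converge strongly, so a.e.\ along a further subsequence, and Fatou's lemma together with convexity finishes the argument. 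Lower semicontinuity at $\mathcal{J}\le0$, where the density is $+\infty$, is built into the extended convex function. Consequently the limit satisfies $\mathcal{J}>0$ a.e.

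\textbf{Main obstacle.} The delicate step is the reference-point freezing, since the density is not bounded uniformly in $(\mathbb{A},\mathcal{J})$. I expect the multiplicative-continuity estimate above to be the crux. An alternative that avoids it is to apply Ioffe's theorem directly. That requires strong convergence of the lower-order variable $G_n$, which holds here (uniformly), together with convexity in the weakly converging variables, and it gives the claim immediately.
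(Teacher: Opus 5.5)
Your proposal is correct, and the ``alternative'' in your last paragraph is the paper's proof. The paper does three things. First, it extends $\rho_0,s_0,\Sigma_0,\bm{\mathsf{B}}_0$ to all of $\mathbb{R}^3$ by Whitney extension, working through $\ln\rho_0$, the matrix logarithm of $\Sigma_0$, and the vector potential $\bm{\mathsf{A}}_0$ so that positivity, positive definiteness and $\nabla\cdot\bm{\mathsf{B}}_0=0$ are preserved. Second, it checks that each of the four extended density terms is continuous, nonnegative and convex in $(\mathbb{A},\mathcal{J})$. Third, it invokes the Ioffe-type Theorem 6.4 of Bene\v{s}ov\'a--Kru\v{z}\'{\i}k, with $G_n\to G$ uniformly and $(\mathbb{A}_n,\mathcal{J}_n)\rightharpoonup(\mathbb{A},\mathcal{J})$ weakly in $L^1$.

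Your main route is different. You first freeze the first slot at the limit, using $\mathsf{w}(\bm{y},\mathbb{A},\mathcal{J})\le(1+\omega(|\bm{y}-\bm{x}_0|))\,\mathsf{w}(\bm{x}_0,\mathbb{A},\mathcal{J})$ for $\bm{y},\bm{x}_0\in Q_0$. This estimate does hold, for the following reasons:
\begin{itemize}
\item the internal-energy coefficients are bounded below on the compact set $Q_0$;
\item $\Sigma_0(\bm{y})\preceq(1+\omega_\Sigma/\sigma_{\min})\Sigma_0(\bm{x}_0)$;
\item $X\mapsto\mathrm{tr}\,X^{\mexp}$ is monotone on positive semidefinite matrices, by Weyl's eigenvalue monotonicity;
\item the $\bm{\mathsf{B}}_0$ perturbation is absorbed into the linear variance term by taking $\eta=\omega_B$.
\end{itemize}
After freezing, you only need lower semicontinuity of a convex integrand with fixed $\bm{x}$-dependence, which Mazur plus Fatou supplies.

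Your route is more elementary and needs no extension step, because $G_n(Q)\subset Q_0$ and $G(Q)\subset Q_0$. You should cite Lemma \ref{closed_thm} explicitly for the second inclusion, since the frozen density is evaluated at $G(\bm{x})$. The cost is that it leans on the specific structure of $\mathsf{w}$: uniform positive definiteness of $\Sigma_0$, positive lower bounds on $\rho_0$ and $\overline{p}(s_0)$, and the multiplicative form of the coefficients.

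The paper's route needs only continuity in the first slot plus convexity. It is shorter and robust to changes in the model, at the price of the extension and a heavier black-box theorem. If you take the direct Ioffe route yourself, you also need the integrand defined for all first-slot values in $\mathbb{R}^3$. You can get this either by extension, as the paper does, or by setting it to $+\infty$ off the closed set $Q_0$ and using the lower-semicontinuous version of the theorem.
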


\begin{proof}
    The proof is a direct application of Theorem 6.4 from Bene\v{s}ova and Kru\v{z}\'{i}k\cite{benesova_weak_2017}. 
    
    We begin by extending the reference domain fields $\rho_0,\Sigma_0,s_0,\bm{\mathsf{B}}_0$ to all of $\mathbb{R}^3$ while preserving their constraints. By the Whitney extension theorem there is a $C^\infty$ function $\widetilde{s}_0:\mathbb{R}^3\rightarrow\mathbb{R}$ that agrees with $s_0$ on $Q$. Since $\rho_0$ is positive $\lambda_0 = \text{ln}\,\rho_0$ is a $C^\infty$ function on $Q_0$. Again by the Whitney extension theorem there is a $C^\infty$ extension $\widetilde{\lambda}_0:\mathbb{R}^3\rightarrow\mathbb{R}$. Since $\widetilde{\rho}_0 = \exp(\widetilde{\lambda}_0)$ is positive and agrees with $\rho_0$ on $Q_0$ it provides a $C^\infty$ extension of $\rho_0$ that is positive everywhere. Since $\Sigma_0$ is uniformly symmetric positive definite we can form is point-wise matrix logarithm $\Lambda_0 = \text{ln}\,\Sigma_0$. The matrix function $\Lambda_0:Q_0\rightarrow\mathbb{R}^{3\times 3}$ is $C^\infty$ and takes values in symmetric matrices. By the Whitney extension theorem applied to the entries of $\Lambda_0$ on the diagonal and above, there is a $C^\infty$ extension $\widetilde{\Lambda}_0:\mathbb{R}^3\rightarrow\mathbb{R}^{3\times 3}$ that also takes values in symmetric matrices. Since $\widetilde{\Sigma}_0 = \exp(\widetilde{\Lambda}_0)$ agrees with $\Sigma_0$ on $Q_0$ it provides a $C^\infty$ extension of $\Sigma_0$ to all of $\mathbb{R}^3$ that is pointwise symmetric positive definite. Finally, by application of the Whitney extension theorem component-wise to $\bm{\mathsf{A}}_0$ there is a $C^\infty$ vector field $\widetilde{\bm{\mathsf{A}}}_0:\mathbb{R}^3\rightarrow\mathbb{R}^3$ that extends $\bm{\mathsf{A}}_0$. Since $\widetilde{\bm{\mathsf{B}}}_0 = \nabla \times\widetilde{\bm{\mathsf{A}}}_0$ agrees with $\bm{\mathsf{B}}_0$ on $Q_0$ it provides a divergence-free $C^\infty $ extension of $\bm{\mathsf{B}}_0$ to all  of $\mathbb{R}^3$. Note that the extension $\widetilde{\Sigma}_0$ need not be uniformly positive definite. Also note that the extension $\widetilde{\bm{\mathsf{B}}}_0$ may have nulls, even if $\bm{\mathsf{B}}_0$ does not.

    Next we extend the domain of definition of the potential energy density. As highlighted in Section \ref{sec:model_formulation}, the potential energy density is given by a function $\mathsf{w}:Q_0\times\mathbb{R}^{3\times 3}\times\mathbb{R}\rightarrow\mathbb{R}\times\{\infty\}$. We decompose this function as a sum of four terms, $\mathsf{w} = \sum_{i=1}^4\mathsf{w}_i$, where
    \begin{align*}
        \mathsf{w}_1(G,\mathbb{A},\mathcal{J}) & = \begin{cases}\frac{\overline{p}(s_0(G))}{\overline{\rho}^\gamma}\frac{(\mathcal{J}\rho_0(G))^\gamma}{\gamma-1}+\frac{\varepsilon_{\text{vac}}\rho_{\text{vac}}^{\vexp}}{(\mathcal{J}\rho_0(G))^{\vexp}}, & \mathcal{J} >0\\ +\infty,& \mathcal{J}\leq 0 \end{cases}\\
        \mathsf{w}_2(G,\mathbb{A},\mathcal{J}) & = \frac{1}{2}\mu_0^{-1}|\mathbb{A}\bm{\mathsf{B}}_0(G)|^2\\
        \mathsf{w}_3(G,\mathbb{A},\mathcal{J}) & = \frac{1}{2}\mu_0^{-1}\text{tr}(\mathbb{A}\Sigma_0(G)\mathbb{A}^T)\\
            \mathsf{w}_4(G,\mathbb{A},\mathcal{J}) & =\frac{1}{2}\mu_0^{-1}\frac{\text{tr}([\mathbb{A}\Sigma_0(G)\mathbb{A}^T]^{\mexp})}{(\Delta B^2)^{\mexp - 1}}.
    \end{align*}
    By replacing the $C^\infty$ reference fields $\rho_0,s_0,\Sigma_0,\bm{\mathsf{B}}_0$ with their $C^\infty$ extensions $\widetilde{\rho}_0,\widetilde{s}_0,\widetilde{\Sigma}_0,\widetilde{\bm{\mathsf{B}}}_0$, we obtain extensions $\widetilde{\mathsf{w}}_i:\mathbb{R}^3\times\mathbb{R}^{3\times 3}\times\mathbb{R}\rightarrow\mathbb{R}^3$ given explicitly by
    \begin{align*}
        \widetilde{\mathsf{w}}_1(G,\mathbb{A},\mathcal{J}) & = \begin{cases}\frac{\overline{p}(\widetilde{s}_0(G))}{\overline{\rho}^\gamma}\frac{(\mathcal{J}\widetilde{\rho}_0(G))^\gamma}{\gamma-1}+\frac{\varepsilon_{\text{vac}}\rho_{\text{vac}}^{\vexp}}{(\mathcal{J}\widetilde{\rho}_0(G))^{\vexp}}, & \mathcal{J} >0\\ +\infty,& \mathcal{J}\leq 0 \end{cases}\\
        \widetilde{\mathsf{w}}_2(G,\mathbb{A},\mathcal{J}) & = \frac{1}{2}\mu_0^{-1}|\mathbb{A}\widetilde{\bm{\mathsf{B}}}_0(G)|^2\\
        \widetilde{\mathsf{w}}_3(G,\mathbb{A},\mathcal{J}) & = \frac{1}{2}\mu_0^{-1}\text{tr}(\mathbb{A}\widetilde{\Sigma}_0(G)\mathbb{A}^T)\\
            \widetilde{\mathsf{w}}_4(G,\mathbb{A},\mathcal{J}) & =\frac{1}{2}\mu_0^{-1}\frac{\text{tr}([\mathbb{A}\widetilde{\Sigma}_0(G)\mathbb{A}^T]^{\mexp})}{(\Delta B^2)^{\mexp - 1}}.
    \end{align*}
    Notice that each of the extensions is continuous and non-negative when the extended real line $\mathbb{R}\cup \{+\infty\}$ is equipped with its usual topology. 

    Now we establish convexity of $\widetilde{\mathsf{w}}_i(G,\mathbb{A},\mathcal{J})$ in $(\mathbb{A},\mathcal{J})$ for each fixed $G\in\mathbb{R}^3$. The function $\widetilde{\mathsf{w}}_1$ is convex in $(\mathbb{A},\mathcal{J})$ because it is equal to a convex single-variable function of $\mathcal{J}$. Note that this observation makes use of positivity of the extension $\widetilde{\rho}_0$. For fixed $G\in\mathbb{R}$, both $\widetilde{\mathsf{w}}_2$ and $\widetilde{\mathsf{w}}_3$ have the form $\mathsf{w}_S(\mathbb{A}) = \text{tr}(\mathbb{A}S\mathbb{A}^T)$, where $S$ is symmetric positive semidefinite. Note this uses pointwise positive definiteness of $\widetilde{\Sigma}_0$. There is a unique symmetric positive-semidefinite matrix $S^{1/2}$ such that $S = S^{1/2}S^{1/2}$. Therefore $\mathsf{w}_S(\mathbb{A}) = \text{tr}(\mathbb{A}S^{1/2} [\mathbb{A}S^{1/2}]^T) = \|\mathbb{A}S^{1/2}\|^2$. This shows the function $\mathsf{w}_S$ is given by compositin a linear map with a squared norm. Such compositions are known to be convex, which implies $\widetilde{\mathsf{w}}_i$ is convex in $(\mathbb{A},\mathcal{J})$ for $i = 2,3$. Finally convexity of $\widetilde{\mathsf{w}}_4$ follows from writing
    \begin{align*}
        \text{tr}([\mathbb{A}\widetilde{\Sigma}_0\mathbb{A}^T]^{\mexp}) = \text{tr}([(\mathbb{A}\widetilde{\Sigma}_0^{1/2})(\mathbb{A}\widetilde{\Sigma}_0^{1/2})^T]^{\mexp}) = \|\mathbb{A}\widetilde{\Sigma}_0^{1/2} \|_{2\mexp}^{2\mexp},
    \end{align*}
    where $\|\cdot \|_{2\mexp}$ denotes the Schatten $2\mexp$-norm. This implies convexity in $\mathbb{A}$ because if $\|\cdot \|_*$ is any norm on $3\times 3$ matrices then $\| \cdot \|_*^r$ is convex whenever $r > 1$; in this case $r = 2\mexp > 2$ because $\mexp > 1$.

    We now apply Theorem 6.4 from Bene\v{s}ova and Kru\v{z}\'{i}k\cite{benesova_weak_2017} as follows. Define $\Phi:Q\times \mathbb{R}^3\times (\mathbb{R}^{3\times 3}\times\mathbb{R})\rightarrow \mathbb{R}\cup\{+\infty\}$ according to
    \begin{align*}
        \Phi(\bm{x},G,\mathbb{A},\mathcal{J}) = \sum_{i=1}^4 \widetilde{\mathsf{w}}_i(G,\mathbb{A},\mathcal{J}).
    \end{align*}
    Since $\Phi$ is independent of $\bm{x}$ it is clear that $\Phi(\cdot,G,\mathbb{A},\mathcal{J})$ is measurable for each $(G,\mathbb{A},\mathcal{J})$. Since each of the $\widetilde{\mathsf{w}}_i$ is continuous it is clear that $\Phi(\bm{x},\cdot,\cdot,\cdot)$ is continuous for all $\bm{x}\in Q$. Since each of the $\widetilde{\mathsf{w}}_i$ is convex in $(\mathbb{A},\mathcal{J})$ it is also clear that $\Phi(\bm{x},G,\cdot,\cdot)$ is convex for each $(\bm{x},G)$. In addition, since each of the $\widetilde{\mathsf{w}}_i$ is non-negative the function $\Phi$ is bounded below by the zero function, which is a member of $L^1(Q)$. It follows that $\Phi$ satisfies each of the hypothesis specified in Theorem 6.4 from Bene\v{s}ova and Kru\v{z}\'{i}k. Because $p > 3$ we know that $G_n$ is contained in $L^1(Q,\mathbb{R}^3)$ and converges uniformly to $G$. By Ball's Corollary 6.2.2.\cite{ballConvexityConditionsExistence1976}, $\mathbb{A}_n$ converges weakly to $\mathbb{A}$ in $L^{p/2}(Q,\mathbb{R}^3)$ and $\mathcal{J}_n$ converges weakly to $\mathcal{J}$ in $L^{p/3}(Q,\mathbb{R}^3)$. In particular $(\mathbb{A}_n,\mathcal{J}_n)$ converges weakly to $(\mathbb{A},\mathcal{J})$ in $L^1$. Setting $z_n = G_n$ and $a_n = (\mathbb{A}_n,\mathcal{J}_n)$ in Theorem 6.4, we therefore infer
    \begin{align}
        \int_Q\Phi(\bm{x},G(\bm{x}),\mathbb{A}(\bm{x}),\mathcal{J}(\bm{x}))\,d^3\bm{x}\leq \liminf_{n\rightarrow\infty}\int \int_Q\Phi(\bm{x},G_n(\bm{x}),\mathbb{A}_n(\bm{x}),\mathcal{J}_n(\bm{x}))\,d^3\bm{x}.\label{BK_thm}
    \end{align}

    To complete the proof we note that $G_n\in\mathcal{Z}$ by hypothesis and $G\in\mathcal{Z}$ by Lemma \ref{closed_thm}. In light of the fact that $\widetilde{\mathsf{w}}_i:\mathbb{R}^3\times\mathbb{R}^{3\times 3}\times\mathbb{R}\rightarrow\mathbb{R}\cup \{+\infty\}$ extends $\mathsf{w}_i:Q_0\times\mathbb{R}^{3\times 3}\times\mathbb{R}\rightarrow\mathbb{R}\cup \{+\infty\}$, the following equalities therefore hold for almost all $\bm{x}\in Q$:
    \begin{gather*}
        \Phi(\bm{x},G(\bm{x}),\mathbb{A}(\bm{x}),\mathcal{J}(\bm{x})) = \mathsf{w}(G(\bm{x}),\mathbb{A}(\bm{x}),\mathcal{J}(\bm{x})),\nonumber\\
         \Phi(\bm{x},G_n(\bm{x}),\mathbb{A}_n(\bm{x}),\mathcal{J}_n(\bm{x})) = \mathsf{w}(G_n(\bm{x}),\mathbb{A}_n(\bm{x}),\mathcal{J}_n(\bm{x})).
    \end{gather*}
    We conclude, by  \,\eqref{BK_thm},
    \begin{align}
        \mathsf{W}(G) &= \int_Q\mathsf{w}(G(\bm{x}),\mathbb{A}(\bm{x}),\mathcal{J}(\bm{x}))\,d^3\bm{x} \nonumber\\
        &=\int_Q\Phi(\bm{x},G(\bm{x}),\mathbb{A}(\bm{x}),\mathcal{J}(\bm{x}))\,d^3\bm{x}\nonumber\\
        &\leq\liminf_{n\rightarrow\infty}\int \int_Q\Phi(\bm{x},G_n(\bm{x}),\mathbb{A}_n(\bm{x}),\mathcal{J}_n(\bm{x}))\,d^3\bm{x}\nonumber\\
        &=\liminf_{n\rightarrow\infty}\int_Q \mathsf{w}(G_n(\bm{x}),\mathbb{A}_n(\bm{x}),\mathcal{J}_n(\bm{x}))\,d^3\bm{x}\nonumber\\
        & =\liminf_{n\rightarrow\infty}\,\mathsf{W}(G_n),
    \end{align}
    as claimed.
\end{proof}

\noindent The main existence result for elasticized magnetohydrostatics is then.
\begin{theorem}\label{minimizer_existence}
Suppose \[3 < \frac{ \vexp \mexp}{\vexp+\mexp} = p.\] Then there is a $G_*\in \mathcal{Z}$ such that, for all $G\in\mathcal{Z}$, $\mathsf{W}(G_*)\leq \mathsf{W}(G)$.
\end{theorem}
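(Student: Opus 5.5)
The proof is the direct method in the calculus of variations, assembled from Theorem \ref{thm:coercivity}, Lemma \ref{closed_thm}, and Lemma \ref{thm:wlsc}. The first step is to show that the infimum $m = \inf_{G\in\mathcal{Z}}\mathsf{W}(G)$ is finite, i.e.\ that $\mathsf{W}$ is not identically $+\infty$ on $\mathcal{Z}$. Since $Q_0$ is $C^\infty$-diffeomorphic to $Q$, I would take a smooth diffeomorphism $G_0:Q\rightarrow Q_0$ with positive Jacobian.

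\begin{itemize}
\item $G_0$ lies in $W^{1,p}$, satisfies \textbf{BL1} and \textbf{BL2}, and satisfies \textbf{BL3} with equality by the change-of-variables formula.
\item Because $Q$ is compact, $\mathcal{J}_0$ is bounded above and bounded away from zero.
\item Each term of $\mathsf{w}$ is therefore a continuous function on a compact set, so $\mathsf{W}(G_0)<\infty$.
\end{itemize}

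Together with $\mathsf{W}\geq 0$, this gives $0\leq m<\infty$. Choose a minimizing sequence $G_n\in\mathcal{Z}$ with $\mathsf{W}(G_n)\rightarrow m$; we may assume $\mathsf{W}(G_n)\leq m+1$ for all $n$.

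The second step is compactness. Inequality \eqref{norm_coercivity_inequality} from Theorem \ref{thm:coercivity} gives $\|G_n\|_{1,p}^p\leq c_0 + c_1(m+1)$, so the sequence is bounded in $W^{1,p}(Q,\mathbb{R}^3)$. Since $3<p<\infty$, this space is reflexive, so by Eberlein--\v{S}mulian (or Banach--Alaoglu plus separability of $L^p$) there is a subsequence, still denoted $G_n$, converging weakly to some $G_*\in W^{1,p}(Q,\mathbb{R}^3)$.

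The third step is closedness and lower semicontinuity. Lemma \ref{closed_thm} shows $\mathcal{Z}$ is weakly sequentially closed, so $G_*\in\mathcal{Z}$. Lemma \ref{thm:wlsc} then gives
\begin{align*}
\mathsf{W}(G_*)\leq\liminf_{n\rightarrow\infty}\mathsf{W}(G_n) = m.
\end{align*}
Combined with $\mathsf{W}(G_*)\geq m$, this shows $G_*$ is a minimizer, i.e.\ $\mathsf{W}(G_*)\leq\mathsf{W}(G)$ for every $G\in\mathcal{Z}$.

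Almost all the analytic difficulty sits in the earlier results, so the only real obstacle here is the nonemptiness and finiteness step. One must check that the diffeomorphism can be chosen orientation-preserving, so that $\mathcal{J}_0>0$ rather than $<0$, which would make $\mathsf{w}_{\text{int}}=+\infty$. If the given diffeomorphism reverses orientation, one cannot simply compose with a reflection of $Q$, since $Q$ need not be symmetric. Instead I would argue as follows.

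\begin{itemize}
\item $Q$ is connected, because the paper implicitly treats it as a toroidal plasma container, so the sign of $\mathcal{J}_0$ is constant.
\item If needed, compose with a reflection of $\mathbb{R}^3$ and adjust $Q_0$ accordingly; alternatively, assume orientation compatibility as part of the setup.
\end{itemize}

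If $Q$ were disconnected, one would repeat the argument componentwise. Everything else is a routine application of the direct method.
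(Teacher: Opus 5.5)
Your proposal is correct and is essentially the paper's proof: coercivity (Theorem~\ref{thm:coercivity}) bounds a minimizing sequence in $W^{1,p}$, reflexivity gives a weakly convergent subsequence, and Lemmas~\ref{closed_thm} and~\ref{thm:wlsc} put the limit in $\mathcal{Z}$ and show that it attains the infimum. Your finiteness step actually repairs a small slip in the paper, which infers $\inf\mathsf{W}<\infty$ from $\mathsf{W}\geq 0$ alone; however, your orientation worry is unnecessary, and altering the given $Q_0$ is not a legitimate move anyway. The diffeomorphism $G_0$ lies in $\mathcal{Z}$ for either orientation, and if $\mathsf{W}\equiv+\infty$ on $\mathcal{Z}$, then every element of $\mathcal{Z}$ is trivially a minimizer.
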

\begin{proof}
    Since $\mathsf{W}(G)\geq 0$ for all $G\in\mathcal{Z}$ the lower bound $\mathsf{W}_{\text{min}} = \inf \{\mathsf{W}(G)\mid G\in\mathcal{Z}\}$ is finite. Choose a bounded decreasing sequence of real numbers $\mathsf{W}_N$, $N =1,2,\dots$, with $\lim_{N\rightarrow\infty}\mathsf{W}_N = \mathsf{W}_{\text{min}}$. Let $\mathcal{Z}_N = \{G\in\mathcal{Z}\mid \mathsf{W}(G)\leq \mathsf{W}_N\}$ denote the sublevel set at energy $\mathsf{W}_N$. For each $N$ the set $\mathcal{Z}_N$ must be non-empty, else $\mathsf{W}_N$ would be a lower bound for $\mathsf{W}$ greater than $\mathsf{W}_{\text{min}}$. Therefore there is an infinite sequence $G_N\in\mathcal{Z}$ with $G_N\in\mathcal{Z}_N$ for each $N$. By Theorem \ref{thm:coercivity} and $G_N\in\mathcal{Z}_N$, the $W^{1,p}$-norm of $G_N$ satisfies
    \begin{align}
        \| G_N\|_{1,p}\leq  \bigg(c_1\,\mathsf{W}(G_N)+c_0\bigg)^{1/p}\leq \bigg(c_1\,\mathsf{W}_N+c_0\bigg)^{1/p}.\label{coercivity_bound}
    \end{align}
    Here $c_0,c_1$ are some positive real numbers given in Eqs.\,\eqref{c0_eqn} and \eqref{c1_eqn}. Since the sequence $\mathsf{W}_N$ is bounded the inequality \eqref{coercivity_bound} implies that the sequence $G_N$ is bounded in $W^{1,p}$. 

    It is well-known that $W^{1,p}(Q,\mathbb{R}^3)$ is reflexive for $p>1$. Since $p >3$ by assumption, reflexivity of $W^{1,p}(Q,\mathbb{R}^3)$ comes for free. The Banach-Alaoglu theorem and the Eberlein-\v{S}mulian theorem therefore imply that the bounded sequence $G_N$ admits a weakly-convergent subsequence $ G_{N(n)}\in\mathcal{Z}$, $n=1,2,\dots$. Let $G_* = \lim_{n\rightarrow\infty}G_{N(n)}$. Since $\mathsf{W}_N$ is a decreasing convergent sequence, $\lim_{n\rightarrow\infty}\mathsf{W}(G_{N(n)}) = \mathsf{W}_{\text{min}}$. Moreover, by Lemma \ref{thm:wlsc},
    \begin{align*}
        \mathsf{W}(G_*)\leq \liminf_{n\rightarrow\infty}\mathsf{W}(G_{N(n)}) = \mathsf{W}_{\text{min}}.
    \end{align*}
    But $\mathsf{W}_{\text{min}}$ is a lower bound on $\mathsf{W}$, which implies $\mathsf{W}_{\text{min}}\leq \mathsf{W}(G_*)$. Thus, $\mathsf{W}(G_*) = \mathsf{W}_{\text{min}}$, and $G_*\in\mathcal{Z}$ is a global minimizer for $\mathsf{W}$, as claimed. 
\end{proof}

\section{Discussion}\label{sec:discussion}
The proof presented here for existence of minimizers of the elasticized Newcomb potential $\mathsf{W}$ largely follows a pattern established in the context of mathematical elasticity \cite{ballOpenProblemsElasticity2002a,ballConvexityConditionsExistence1976,ballMinimizersEulerlagrangeEquations1984,ciarletInjectivitySelfcontactNonlinear1987}. 
Its most novel mathematical ingredient comprises the method used to establish coercivity, which entails bounding the gradient of the back-to-labels map by its adjugate. The key inequality is given in  \,\eqref{pointwise_adjugate_bound}, which is an elementary result in linear algebra.

The idea of modifying Newcomb's potential by including the excess magnetic energy associated with turbulent fluctuations originated in earlier work in Ref.\cite{ruthStatisticalEquilibriumModel2026}. However, this earlier statistical equilibrium model assumed the one-point reference magnetic variance does not depend on the large-scale fluid state $G$. It also fell short of establishing a rigorous existence theory. In contrast, the elasticized magnetohydrostatic model presented here hypothesizes that the reference magnetic variance grows with plasma strain due to profile stiffness. This profile stiffness effect, together with the vacuum barrier, are the crucial new ingredients needed for the existence theory presented here. 


It is presently unclear whether solutions of elasticized magnetohydrostatics are always smoother than Bruno-Laurence solutions. While the rigorous existence theory presented here only guarantees minimizers in $W^{1,p}(Q,\mathbb{R}^3)$, a space that allows for jump discontinuities in the magnetic field, we find that well-resolved numerical solutions, such as those shown in Fig.\,\ref{fig:side_by_side}, do not exhibit jump discontinuities. Like the existence theory, the difficulty in assessing regularity of minimizers for the elasticized Newcomb potential is largely inherited from the theory of mathematical elasticity, where equilibrium regularity is an open problem in general \cite{ballOpenProblemsElasticity2002a}. This obstacle might be sidestepped by allowing for strain gradient dependence in the mean magnetic energy, which is specifically neglected in the present work. A plausible source of strain gradient dependence is large-scale turbulent fluid motions, as discussed in Holm's formulation of the Lagrangian-averaged $\alpha$-model \cite{holmLagrangianAveragesAveraged2002}.


\begin{figure}
    \centering
    \includegraphics[width=0.4\linewidth]{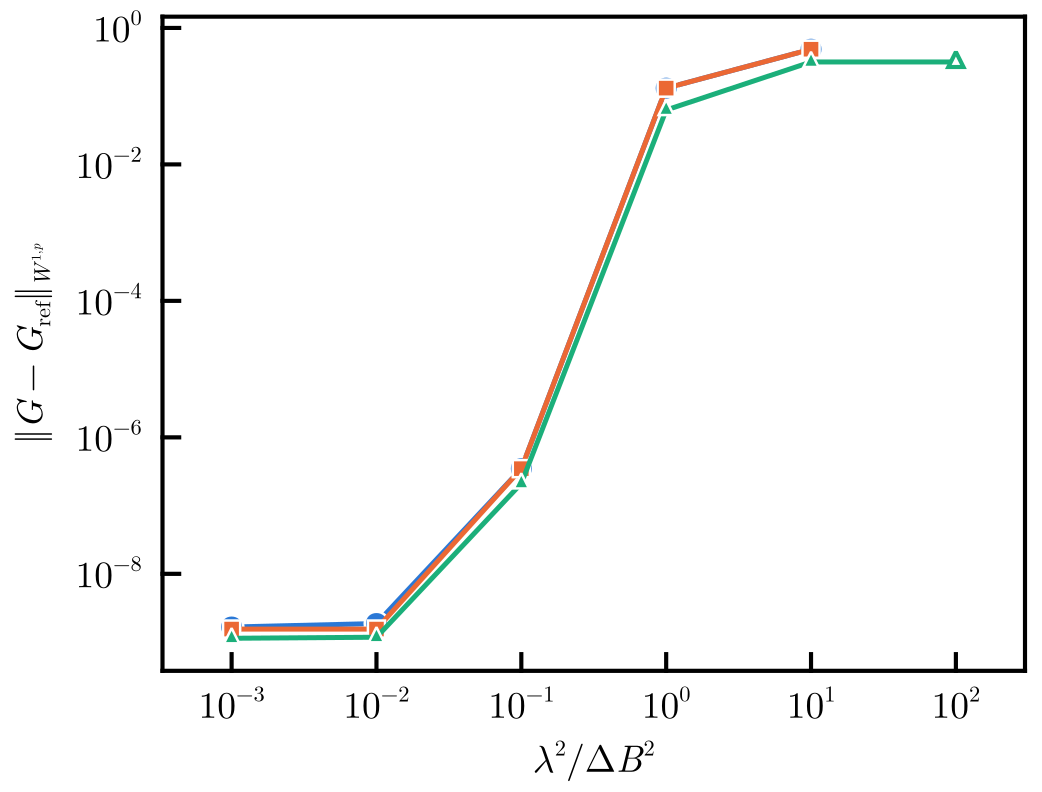}
    \includegraphics[width=0.4\linewidth]{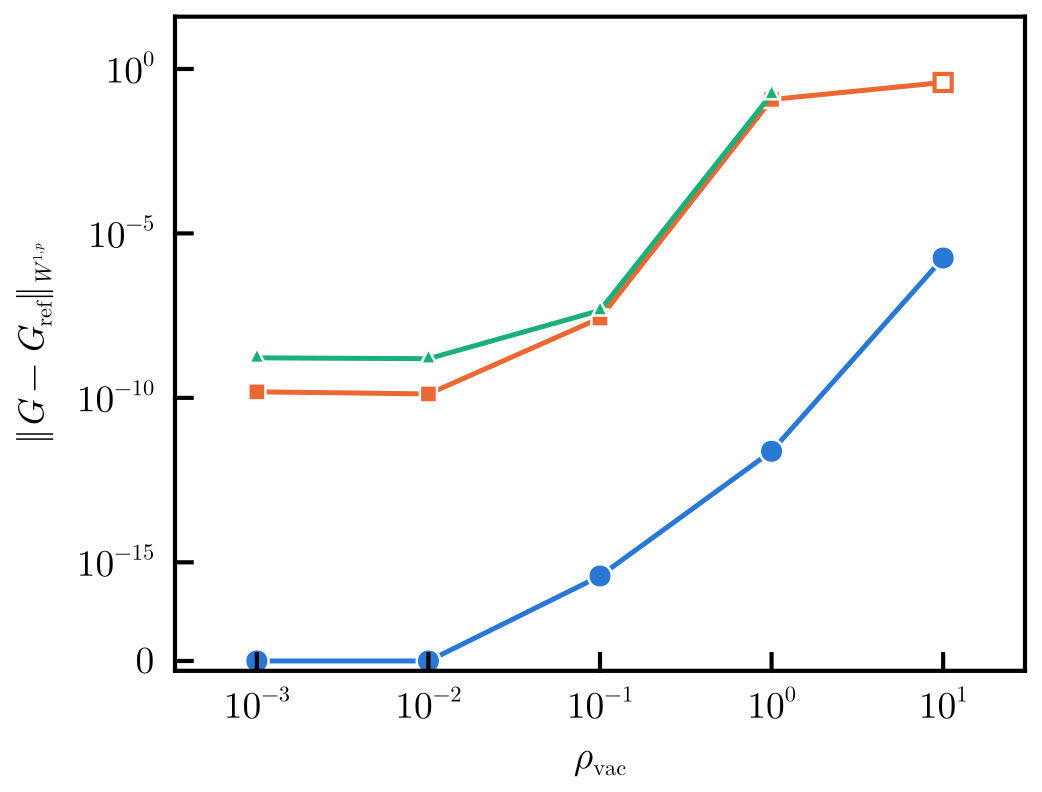}
    \caption{Self-convergence of computed solutions as either the nonlinear elastic response (left) or vacuum barrier (right) is removed. The reference magnetic variance is $\Sigma_0 = \lambda^2\,\mathbb{I}$ with $\lambda = 0.05$. Plotted curves show $W^{1,p}$-distance to a reference solution with $\Delta B^2 = 10^4\times \lambda^2$ (left) or $\rho_{\text{vac}}=0$ (right). Convergence studies with various values of $\rho_{\text{vac}}$ (left) or $\lambda^2/\Delta B^2$ (right) are indicated with different colors.}
    \label{fig:scans}
\end{figure}

The existence theory established here relies on both the vacuum barrier, proportional to $\rho_{\text{vac}}>0$ , and the nonlinear elastic response, proportional to $1/\Delta B^2 > 0$. If $\rho_{\text{vac}}\rightarrow 0$ or $1/\Delta B^2\rightarrow 0$ the elasticized potential loses coercivity and the direct method in the calculus of variations cannot be applied. On the other hand, numerical simulations summarized in Fig. \ref{fig:scans} indicate that solutions remain well-behaved when either $\rho_{\text{vac}}\rightarrow 0$ or $1/\Delta B^2\rightarrow 0$. This suggests that neither the vacuum barrier nor the nonlinear elastic response is neccessary to obtain well-behaved equilibria using the elasticized model. This conjecture is also supported both by numerics and formal asymptotics from Ref.\,\cite{ruthStatisticalEquilibriumModel2026}. 

If plasma incompressibility is introduced as an additional hypothesis in elasticized magnetohydrostatics, the vacuum barrier is completely unnecessary. It would be interesting to establish existence of incompressible minimizers without the vacuum barrier in future work. It would also be interesting to extend the modeling principles introduced here to other equations of state.


Theorem \ref{minimizer_existence} on existence of minimizers for $\mathsf{W}$ does \emph{not} establish that minimizers satisfy the Euler-Lagrange equations. The gap between establishing existence of minimizers and establishing validity of the Euler-Lagrange equations is well-known in the calculus of variations, and in the theory of mathematical elasticity in particular \cite{ballOpenProblemsElasticity2002a,ballMinimizersEulerlagrangeEquations1984}. Part of the subtlety is already clear for minimization of the single-variable function $f(x) = x^2$, with $x\geq 1$, whose solution, $x=1$, is not a critical point for $f$. The subtleties are more intricate in infinite dimensions. Given the formal similarity of the elasticized Newcomb potential to potentials for hyperelastic solids, methods for establishing variants of the Euler-Lagrange equations developed within elasticity theory likely apply to the elasticized magnetohydrostatic model. This strategy for establishing some form of the Euler-Lagrange equations would be interesting to explore in future work. The strong-form Euler-Lagrange equations for the functional \eqref{potential_energy_defined} are given by $\nabla\cdot\mathbb{T} = 0$, where the stress tensor $\mathbb{T}$ is given by
\begin{align*}
    \mathbb{T} &= -\mathsf{p}\,\mathbb{I} + \mu_0^{-1}\bigg(\bm{\mathsf{B}}\bm{\mathsf{B}}^T - \frac{1}{2}|\bm{\mathsf{B}}|^2\mathbb{I}\bigg) + \mu_0^{-1}\bigg(\Sigma  - \frac{1}{2}\text{tr}(\Sigma)\,\mathbb{I}\bigg)\\
    & + \mu_0^{-1}\frac{1}{(\Delta B^2)^{\mexp-1}}\bigg(\mexp\,\Sigma^{\mexp} - \frac{1}{2}\text{tr}([2\mexp-1]\Sigma^{\mexp})\,\mathbb{I}\bigg),
\end{align*}
and the Eulerian fields $\mathsf{p}$, $\bm{\mathsf{B}}$, $\Sigma$ are given in terms of the back-to-labels map $G$ by 
\begin{gather*}
\mathsf{p}(\bm{x}) = \rho^2(\bm{x})\,\partial_\rho\mathcal{U}^*(\rho(\bm{x}),s(\bm{x})),\quad \rho(\bm{x}) = \mathcal{J}(\bm{x})\rho_0(G(\bm{x})),\quad s(\bm{x}) = s_0(G(\bm{x})),\\
 \bm{\mathsf{B}}(\bm{x}) = \mathbb{A}(\bm{x})\bm{\mathsf{B}}_0(G(\bm{x})),\quad \Sigma(\bm{x}) = \mathbb{A}(\bm{x})\Sigma_0(G(\bm{x}))\mathbb{A}^T(\bm{x}).
\end{gather*}
Here $\mathcal{U}^*$ is defined in \eqref{modified_int_en_den}.

\dataavail{All code used to produce the figures in this manuscript is freely available at \texttt{https://doi.org/10.5281/zenodo.23649878}.}

\acknow{This research was supported by: (I) the U.S. Department of Energy, Office of Science,
Office of Fusion Energy Sciences, under contract DE-FG02-04ER54742; (II) a Google.org AI Impact Challenge: AI for Science award; and (III) the U.S. Department of
Energy, Office of Science, Office of Advanced Scientific Computing Research, as a part of the
Mathematical Multifaceted Integrated Capability Centers program, under Award Number
DE-SC0023164. The authors used Claude Opus 4.8 and 5.5 during manuscript preparation to assist with verification of analytical calculations. All Article prose was prepared by the authors without the assistance of artificial intelligence. The authors takes full responsibility for all final content, analysis, and interpretations.  }

\showacknow 

\bibliography{pnas-sample}

\end{document}